\theoremstyle{plain}
\newtheorem{theorem}{Theorem}
\newtheorem{lemma}[theorem]{Lemma}
\def\Tr{{\rm Tr}}
\def\bq{\begin{eqnarray}}
\def\eq{\end{eqnarray}}
\def\bqq{\begin{align*}}
\def\eqq{\end{align*}}
\def\nn{\nonumber}
\def\eps{\varepsilon}
\def\Re{\operatorname{Re}}
\def\1{{\ensuremath {\mathds 1} }}
\def\cF {\mathcal{F}}
\def\R {\mathbb{R}}
\def\cN {\mathcal{N}}
\def\cG {\mathcal{G}}
\def\gH{\mathcal{H}}
\def\R {\mathbb{R}}
\def\N {\mathbb{N}}
\def\cC{\mathcal{C}}
\def\cD{\mathcal{D}}
\def\d{{\rm d}}
\def\bT{{\mathbb{T}}}
\def\bZ{{\mathbb{Z}}}
\def\wt{\widetilde}
\def\bH{\mathbb{H}}
\def\dGamma{{\rm d}\Gamma}
\title[The ground state density of a mean-field Bose gas]{Two-term expansion of the ground state one-body density matrix of a mean-field Bose gas}
 \author[P.T. Nam]{Phan Th\`anh Nam}
\address{Department of Mathematics, LMU Munich, Theresienstrasse 39, 80333 Munich, Germany} 
\email{nam@math.lmu.de}
\author[M. Napi\'orkowski]{Marcin Napi\'orkowski}
\address{Department of Mathematical Methods in Physics, Faculty of Physics, University of Warsaw,  Pasteura 5, 02-093 Warszawa, Poland}
\email{marcin.napiorkowski@fuw.edu.pl}
\begin{document}

\begin{abstract}  We consider the homogeneous Bose gas  on a unit torus in the mean-field regime when the interaction strength is proportional to the inverse of the  particle number. In the limit when the number of particles becomes large, we derive a two-term expansion of the one-body density matrix of the ground state. The proof is based on a  cubic correction to  Bogoliubov's approximation of the ground state energy and the ground state.   
\end{abstract}

\date{\today}

\maketitle

\setcounter{tocdepth}{1}

\section{Introduction}

We consider a homogeneous system of $N$ bosons on the unit torus $\bT^d$, for any dimension $d\ge 1$. The system is governed by the mean-field Hamiltonian
\bq \label{eq:HN-intro}
H_{N} = \sum_{j=1}^N -\Delta_{x_j} +\frac{1}{N-1}\sum_{1\le j<k\le N} w(x_j-x_k)
\eq
which acts on the bosonic Hilbert space
$$\gH^N=L^2_{\rm sym}((\bT^d)^N).$$
Here the kinetic operator $-\Delta$ is the usual Laplacian (with periodic boundary conditions). The interaction potential $w$ is a real-valued, even function. We assume that its Fourier transform is non--negative and integrable, namely  
$$
w(x)= \sum_{p\in 2\pi \mathbb{Z}^d} \widehat{w}(p) e^{ip\cdot x} \quad \text{with}\quad  0\le \widehat w \in \ell^1( 2\pi \mathbb{Z}^d ). 
$$
In particular, $w$ is bounded. Since $w$ is even, $\widehat{w}$ is also even.

Under the above conditions, $H_N$ is well defined on the core domain of smooth functions. Moreover, it is well-known that $H_N$ is bounded from below and can be extended to be a self-adjoint operator by Friedrichs' method. The self-adjoint extension, still denoted by $H_N$, has a unique ground state $\Psi_N$ (up to a complex phase)  which solves the variational problem
$$
E_N = \inf_{\| \Psi\|_{\gH^N}=1} \langle \Psi, H_N \Psi\rangle. 
$$
Here $\langle \cdot, \cdot\rangle$ is the inner product in $\gH^N$. \footnote{We use the convention that  $\langle \cdot, \cdot\rangle$ is linear in the second argument and anti-linear in the first.}  

In the present paper, we are interested in the asymptotic behavior of the ground state $\Psi_N \in \gH^N$ of $H_N$ in the limit when $N\to \infty$. More precisely, we will focus on the one-body density matrix $\gamma_{\Psi_N}^{(1)}$ which is a trace class operator on $L^2(\bT^d)$ with kernel
$$
\gamma_{\Psi_N}^{(1)}(x,y) = N  \int_{\bT^{d(N-1)}} \Psi_N (x, x_1, \ldots, x_N) \overline{\Psi_N (y, x_2, \ldots x_N)}dx_2\ldots dx_N.
$$
Note that $\gamma_{\Psi_N}^{(1)}\ge 0$ and $\Tr \gamma_{\Psi_N}^{(1)}=N$. 

\subsection{Main result} 
Our main theorem is  

\begin{theorem}[Ground state density matrix]\label{thm:main-1} Assume that $0\le \widehat w\in \ell^1( (2\pi \mathbb{Z})^d )$. Then the ground state $\Psi_N$ of the Hamiltonian $H_N$ in \eqref{eq:HN-intro} satisfies 
\begin{align*}
\lim_{N\to \infty} \Tr \Big| \gamma^{(1)}_{\Psi_N} - \Big(N-\sum_{p\ne 0} \gamma_p^2\Big) |u_0\rangle \langle u_0| - \sum_{p\ne 0} \gamma_p^2 |u_p\rangle\langle u_p| \Big| =0
\end{align*}
where 
$$
u_p(x)=e^{ip\cdot x},\quad \gamma_{p} = \frac{\alpha_p}{\sqrt{1-\alpha_p^2}}, \quad \alpha_p= \frac{\widehat w(p)}{p^2+ \widehat w(p) + \sqrt{p^4+ 2p^2 \widehat w(p)}} \cdot
$$
\end{theorem}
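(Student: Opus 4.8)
The plan is to work in the grand-canonical/Fock space picture and approximate the ground state via Bogoliubov theory pushed to the next order (the ``cubic correction''). Concretely, I would first factor out the condensate by writing $\Psi_N$ in terms of excitations orthogonal to $u_0$, using the excitation map of Lewin--Nam--Serfaty--Solovej (or the c-number substitution $a_0,a_0^*\mapsto\sqrt{N-\cN_+}$), so that $H_N$ is unitarily equivalent to an operator on Fock space $\mathcal{F}_+=\mathcal{F}(u_0^\perp)$ of the form $\mathbb{H}_N = \mathbb{H}_{\mathrm{Bog}} + \mathbb{H}_{\mathrm{cubic}} + (\text{lower order})$, where $\mathbb{H}_{\mathrm{Bog}}$ is quadratic and $\mathbb{H}_{\mathrm{cubic}}$ collects the terms cubic in creation/annihilation operators with one power of $\sqrt{N}$ less. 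The quadratic part is diagonalized by the Bogoliubov transformation with coefficients $(\cosh,\sinh)$ matched to $\alpha_p$, and in that picture the leading-order one-body density matrix of the approximate ground state $\Phi_{\mathrm{Bog}}$ is exactly $\sum_{p\ne0}\sinh^2(\theta_p)|u_p\rangle\langle u_p| = \sum_{p\ne0}\gamma_p^2|u_p\rangle\langle u_p|$ on $u_0^\perp$, with $\mathrm{Tr}\,\gamma^{(1)}_{\Psi_N}=N$ forcing the $u_0$-coefficient to be $N-\sum_{p\ne0}\gamma_p^2$.

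The main steps, in order: (i) establish that $\Psi_N$ is close to $\Phi_{\mathrm{gs}} := U^* e^{-\mathcal{B}} e^{-\mathcal{S}} \Omega$ (or a suitable normalized version), where $e^{-\mathcal{B}}$ is the Bogoliubov transformation and $e^{-\mathcal{S}}$ the cubic transformation removing $\mathbb{H}_{\mathrm{cubic}}$ to the relevant order, in the sense that $\langle \Psi_N, \cN_+^k \Psi_N\rangle$ is uniformly bounded and $\|\Psi_N - \Phi_{\mathrm{gs}}\|\to 0$; this is where I would invoke the energy expansion $E_N = e_{\mathrm{Bog}} N^0 + o(1)$ together with coercivity/a priori estimates on $\mathbb{H}_N - E_N$ bounding $\cN_+$ and the off-condensate kinetic energy. (ii) Translate convergence of states into convergence of one-body density matrices: from $\|\Psi_N-\Phi_{\mathrm{gs}}\|\to0$ plus uniform bounds on $\cN_+$ (hence on $\mathrm{Tr}\,\gamma^{(1)}_{\gtrless 0}$ restricted to $u_0^\perp$), deduce trace-norm convergence of $\gamma^{(1)}_{\Psi_N}$ restricted to $u_0^\perp$ to $\gamma^{(1)}_{\Phi_{\mathrm{gs}}}\!\restriction_{u_0^\perp}$. (iii) Compute $\gamma^{(1)}_{\Phi_{\mathrm{gs}}}\!\restriction_{u_0^\perp}$ and show it equals $\sum_{p\ne0}\gamma_p^2|u_p\rangle\langle u_p|$ up to $o(1)$ in trace norm: the Bogoliubov part gives exactly this, and one must check that the cubic transformation $e^{-\mathcal{S}}$ contributes only $o(1)$ to the momentum-distribution $\langle a_p^* a_p\rangle$ for each $p$ and, crucially, to the full trace norm. (iv) Handle the condensate mode: since $\mathrm{Tr}\,\gamma^{(1)}_{\Psi_N}=N$ exactly and the off-diagonal blocks $|u_0\rangle\langle u_p|$ vanish in trace norm (again from the $\cN_+$ bounds and $\langle a_0^* a_p\rangle$-type estimates), the $u_0$-diagonal entry is pinned to $N-\mathrm{Tr}(\gamma^{(1)}_{\Psi_N}\!\restriction_{u_0^\perp}) = N - \sum_{p\ne0}\gamma_p^2 + o(1)$.

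The hard part will be step (i)--(iii) done \emph{simultaneously in trace norm rather than just for fixed momenta}: getting the momentum distribution $\langle \Psi_N, a_p^* a_p \Psi_N\rangle \to \gamma_p^2$ for each fixed $p$ is comparatively soft, but upgrading this to $\sum_{p\ne0}|\langle \Psi_N, a_p^* a_p\Psi_N\rangle - \gamma_p^2| \to 0$ requires uniform-in-$p$ tail control, i.e. showing that $\sum_{|p|>R}\langle \Psi_N, a_p^* a_p\Psi_N\rangle$ is small uniformly in $N$ (matching the convergent tail $\sum_{|p|>R}\gamma_p^2$, which converges because $\gamma_p^2 = O(\widehat w(p)^2/p^4)$ and $\widehat w\in\ell^1$). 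This in turn needs a second-moment/higher-moment a priori bound such as $\langle \Psi_N, \mathcal{K}\,\cN_+\Psi_N\rangle \lesssim 1$ where $\mathcal{K}=\sum_p p^2 a_p^* a_p$ is the off-condensate kinetic energy, extracted from commuting $\cN_+$ with $\mathbb{H}_N$ and using positivity of $\widehat w$. The cubic correction must then be shown not to spoil this tail estimate, which is the technical crux: one bounds the difference $\gamma^{(1)}_{e^{-\mathcal{S}}\Phi_{\mathrm{Bog}}} - \gamma^{(1)}_{\Phi_{\mathrm{Bog}}}$ using Gronwall along the flow generated by $\mathcal{S}$ together with the fact that $\mathcal{S}$ carries a factor $N^{-1/2}$, so its contribution to each mode is $O(N^{-1})$ and to the trace is $O(N^{-1}\cdot(\text{number of relevant modes}))$, which still goes to zero after the tail truncation.
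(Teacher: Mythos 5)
Your overall toolkit matches the paper's (excitation map $U_N$, Bogoliubov transformation, a cubic correction, and pinning the $u_0$-entry by $\Tr\gamma^{(1)}_{\Psi_N}=N$), but you have misplaced where the real difficulty sits, and the step you wave away is precisely the one that needs a new argument. The crux is your step (iv), the off-diagonal block $P\gamma^{(1)}_{\Psi_N}Q+Q\gamma^{(1)}_{\Psi_N}P$: its trace norm is $2\|Q\gamma^{(1)}_{\Psi_N}u_0\|$, and the matrix elements are $\langle u_p,\gamma^{(1)}_{\Psi_N}u_0\rangle=\langle \Psi_N,a_0^*a_p\Psi_N\rangle=\langle U_N\Psi_N,\sqrt{N-\cN_+}\,a_pU_N\Psi_N\rangle\approx \sqrt{N}\,\langle U_N\Psi_N,a_pU_N\Psi_N\rangle$. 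Because of this $\sqrt N$ amplification, "$\cN_+$ bounds and $\langle a_0^*a_p\rangle$-type estimates" do not suffice: uniformly bounded moments of $\cN_+$ give only $O(\sqrt N)$ for this trace norm, and even the Bogoliubov-level improvement $\langle U_BU_N\Psi_N,\cN_+U_BU_N\Psi_N\rangle\le CN^{-1}$ gives only $O(1)$, not $o(1)$ (this is exactly the open question from the binding-energy paper that this work resolves). What is needed is $\sum_{p\ne0}|\langle U_N\Psi_N,a_pU_N\Psi_N\rangle|^2=o(N^{-1})$, and in the paper this comes from the cubic conjugation: the refined bound $\langle\Phi,\cN_+\Phi\rangle\le CN^{-3/2}$ for $\Phi=U_SU_BU_NΨ_N$, followed by conjugating $a_p$ through $U_B$ (giving $\sigma_pa_p+\gamma_pa_{-p}^*$) and through $U_S$ via a Duhamel expansion, with the commutator $[S,a_p]$ controlled in $\ell^2_p$ using the summability of the kernel $\eta_{p,q}$, the particle-number cutoff in $S$, and fourth moments of $\cN_+$ (this is why the higher-moment Lemma on $\langle\Psi_N,\cN_+^s\Psi_N\rangle$ is needed). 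A naive transfer from the norm approximation $\|U_N\Psi_N-U_B^*U_S^*|0\rangle\|\le CN^{-3/4}$ also fails here, since terms like $\sum_p|\langle U_N\Psi_N,a_p r\rangle|^2$ cannot be controlled by $\|r\|^2$ times a finite quantity.

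Conversely, the part you single out as the "technical crux" --- upgrading fixed-$p$ convergence of $\langle a_p^*a_p\rangle$ to trace-norm convergence of $Q\gamma^{(1)}_{\Psi_N}Q$ --- is comparatively soft and needs no cubic input at all: the Bogoliubov-level results already give $\Tr Q\gamma^{(1)}_{\Psi_N}Q=\langle U_N\Psi_N,\cN_+U_N\Psi_N\rangle\to\sum_{p\ne0}\gamma_p^2$ (convergence in the $\bH_{\rm Bog}$-form norm dominates $\cN_+$) together with entrywise convergence, and for positive operators convergence of traces plus weak convergence upgrades automatically to trace-norm convergence, so no separate uniform-in-$p$ tail argument of the kind you describe is required. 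So your proposal is not wrong in outline, but as written it treats the genuinely hard estimate as routine and spends its effort where the existing Bogoliubov theory already suffices; to be a proof it must supply the quantitative off-diagonal bound (in the paper, $\Tr|P\gamma^{(1)}_{\Psi_N}Q+Q\gamma^{(1)}_{\Psi_N}P|\le CN^{-1/4}$) through the cubic transformation machinery.
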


Here $|u\rangle \langle u|$ is the orthogonal projection on $u$. We use  the bra-ket notation, where $|u\rangle=u$ is a vector in the Hilbert space $\gH$ and $\langle u|$ is an element in the dual space of $\gH$ which maps any vector $v\in \gH$ to the inner product $\langle u,v\rangle_{\gH}$. 

%
%

To the leading order, our result implies Bose-Einstein condensation, namely 
$$
\lim_{N\to \infty}\frac{1}{N}\gamma_{\Psi_N}^{(1)} =  |u_0\rangle \langle u_0|
$$
in the trace norm. This result is  well-known and it follows easily from Onsager's inequality
\bq \label{eq:Onsager}
\frac{1}{N-1} \sum_{1\le i<j\le N} w(x_i-x_j) \ge \frac{N}{2}\widehat w(0) - \frac{N}{N-1} w(0)
\eq
(see \cite{Seiringer-11}). The significance of Theorem \ref{thm:main-1} is that it gives the next order correction to $\gamma^{(1)}_{\Psi_N}$, thus justifying Bogoliubov's approximation in a rather strong sense as we will explain.

\subsection{Bogoliubov's approximation}
It is convenient to turn to the grand canonical setting. Let us introduce the Fock space 
$$
\cF= \bigoplus_{n=0}^\infty \gH^n = \mathbb{C} \oplus \gH \oplus \gH^2 \oplus \cdots  
$$
For any Fock space vector $\Psi=(\Psi_n)_{n=0}^\infty \in \cF$ with $\Psi_n\in \gH^n$, we define its norm  by 
$$
\|\Psi\|_{\cF}^2= \sum_{n=0}^\infty \|\Psi_n \|_{\gH^n}^2. 
$$
and define the particle number expectation by 
$$
\langle \Psi, \cN \Psi\rangle = \sum_{n=0}^\infty n \|\Psi_n\|^2_{\gH^n}. 
$$
In particular, the vacuum state $|0\rangle = (1,0,0,...)$ is a normalized vector on Fock space which has the particle number expectation $\langle 0| \cN |0\rangle =0$. 

For any $f\in \gH$, the creation operator $a^*(f)$ on Fock space maps from $\gH^n$ to $\gH^{n+1}$ for every $n\ge 0$ and satisfies  
\begin{align*}
	(a^* (f) \Psi_n )(x_1,\dots,x_{n+1})&= \frac{1}{\sqrt{n+1}} \sum_{j=1}^{n+1} f(x_j)\Psi(x_1,\dots,x_{j-1},x_{j+1},\dots, x_{n+1}), \quad \forall \Psi_n\in \gH^n. 
\end{align*}
Its adjoint is the annihilation operator $a(f)$, which maps from $\gH^n$ to $\gH^{n-1}$ for every $n\ge 0$ (with convention $\gH^{-1}=\{0\}$) and satisfies  
\begin{align*}
	(a(f) \Psi_n )(x_1,\dots,x_{n-1}) &= \sqrt{n} \int_{\R^d} \overline{f(x_n)}\Psi(x_1,\dots,x_n) d x_n, \quad \forall \Psi_n \in \gH^n. 
\end{align*}
We will denote by $a_p^*$ and $a_p$ the creation and annihilation operators with momentum $p\in 2\pi \mathbb{Z}^d$, namely
$$
a_p^*=a^*(u_p), \quad a_p=a(u_p), \quad u_p(x)=e^{i p \cdot x}.  
$$
They satisfy the canonical commutation relation (CCR)
\bq \label{eq:CCR}
[a_p,a_q]=0=[a_p^*,a_q^*], \quad [a_p,a_q^*]=\delta_{p,q}
\eq
where $[X,Y]=XY-YX.$ 

The creation and annihilation operators can be used to express several operators on Fock space. For example, the number operator can be written as
$$
\cN = \sum_{n=0}^\infty n \1_{\gH^n} = \sum_{p\in 2\pi \bZ^d} a_p^* a_p.
$$
Similarly, the Hamiltonian $H_N$ in \eqref{eq:HN-intro} can be rewritten as 
\bq \label{eq:grand-Hamiltonian}
H_N =\sum_{p \in 2\pi \bZ^d}  p^2 a_p^* a_p +\frac{1}{2(N-1)} \sum_{p,q,k \in \bZ^d} \widehat w(k) a^*_{p-k}a^*_{q+k}a_p a_q.
\eq
The right side of \eqref{eq:grand-Hamiltonian} is an operator on Fock space, which coincides with \eqref{eq:HN-intro} when being restricted to $\gH^N$. In the following we will only use the grand--canonical formula \eqref{eq:grand-Hamiltonian}.

In 1947, Bogoliubov \cite{Bogoliubov-47} suggested a heuristic argument to compute the low-lying spectrum of the operator $H_N$ by using a perturbation around the condensation.  Roughly speaking, he proposed to first substitute all operators $a_0$ and $a_0^{*}$ in \eqref{eq:grand-Hamiltonian} by the scalar number $\sqrt{N}$ (c-number substitution\footnote{Strictly speaking, for $a_0^*a_0^*a_0a_0$ we should rewrite it as $(a_0^*a_0)^2-a_0^*a_0$ before doing the substitution}), and then ignore all interaction terms which are coupled with coefficients of order $o(1)_{N\to \infty}$. All this leads to the formal expression
\bq \label{eq:Bog-formal}
H_N \approx \frac{N}{2}\widehat w(0) + \bH_{\rm Bog}
\eq
where 
\begin{equation}\label{eq:BogHam}
\bH_{\rm Bog}= \sum_{p\ne 0} \left( \big( p^2+\widehat w(p)  \big)  a_p^* a_p + \frac{1}{2}\widehat w(p) \big( a^*_p a^*_{-p}+ a_p a_{-p}\big)\right).
\end{equation}

Note that the expression \eqref{eq:Bog-formal} is formal since $H_N$ acts on the $N$-body Hilbert space $\gH^N$ while the Bogoliubov Hamiltonian $\bH_{\rm Bog}$ acts on the excited Fock space 
$$
\cF_+ = \bigoplus_{n=0}^\infty \gH_+^n = \mathbb{C} \oplus \gH_+ \oplus \gH_+^2 \oplus \cdots, \quad \gH_+ = Q \gH
$$
where we have introduced the projections 
$$ Q  = \sum_{p\ne 0} |u_p\rangle \langle u_p| = 1- P, \quad P= |u_0\rangle \langle u_0|.$$
In particular, unlike $H_N$, the quadratic Hamiltonian $\bH_{\rm Bog}$ does not preserve the number of particles. Nevertheless,  $\bH_{\rm Bog}$ can be {\em diagonalized} by the following unitary transformation on $\cF_+$
\bq \label{eq:UB-intro}
U_{\rm B}= \exp\Big( \sum_{p\neq 0} \beta_p   (a^*_p a^*_{-p}-a_p a_{-p})\Big)
\eq
where the coefficients $\beta_p>0$ are determined by 
$$
 \tanh(2\beta_p)=\alpha_p = \frac{\widehat w(p)}{p^2+ \widehat w(p) + \sqrt{p^4+ 2p^2 \widehat w(p)}} \cdot
$$
In fact, by using the CCR \eqref{eq:CCR} it is straightforward to check that 
\bq \label{eq:Bog-trans}
U_{\rm B} a_p U_{\rm B}^* = \frac{a_p+\alpha_p a_{-p}^*}{\sqrt{1-\alpha_p^2}}=: \sigma_p a_p +\gamma_{p}a_{-p}^*,  \qquad \forall p\ne 0
\eq
where
$$
\sigma_p := \frac{1}{\sqrt{1-\alpha_p^2}} = \cosh(\beta_p), \quad \gamma_p:= \frac{\alpha_p}{\sqrt{1-\alpha_p^2}} = \sinh(\beta_p). 
$$
Consequently, 
\bq \label{eq:Bog-UBHUB}
U_{\rm B} \bH_{\rm Bog} U_{\rm B}^* = E_{\rm Bog} + \sum_{p\ne 0} e(p) a_p^* a_p, 
\eq
where
$$
E_{\rm Bog} = -  \frac{1}{2} \sum_{p\ne 0}   \left( |p|^2 + \widehat w(p) - e(p) \right) , \quad e_p=\sqrt{|p|^4+2 |p|^2 \widehat w(p)}. 
$$
Note that the assumption $0\le \widehat{w} \in \ell^1(2\pi \bZ^d)$ ensures that $E_{\rm Bog}$ is finite. Moreover we have the uniform bounds
\begin{align} \label{eq:sigma-gamma-w}
\sum_{p \ne 0} \gamma_p \le C,\quad \sup_{p\ne 0} \sigma_p \le C . 
\end{align}

Thus  Bogoliubov's approximation predicts that the ground state energy of $H_N$ is 
\bq \label{eq:EN-GSE}
E_{ N}= \frac{N}{2} \widehat w(0) + E_{\rm Bog} + o(1)_{N\to \infty}. 
\eq
In 2011, Seiringer \cite{Seiringer-11} gave the first rigorous proof of \eqref{eq:EN-GSE}. He also proved that the low-lying spectrum of $H_N$ is given approximately by the {\em elementary excitation} $e_p$.  These results have been extended to inhomogeneous trapped systems in \cite{GreSei-13}, to more general interaction potentials in \cite{LewNamSerSol-15}, to a large volume limit in \cite{DerNap-14}, and to situations of multiple-condensation in \cite{NamSei-15,RouSpe-18}. 

\medskip
Let us recall the approach in \cite{LewNamSerSol-15} which also provides the convergence of the  ground state of the mean-field Hamiltonian $H_N$ in \eqref{eq:HN-intro}. Mathematically, the formal expression \eqref{eq:Bog-formal} can be made rigorous using the unitary operator introduced in \cite{LewNamSerSol-15}
$$U_N: \gH^N\to \cF_+^{\le N}=\1^{\le N} \cF_+, \quad \1^{\le N} = \1(\cN_+ \le N) $$ 
which is defined by 
\begin{align} \label{eq:UN}
U_N = \sum_{j=0}^N Q^{\otimes j} \left(\frac{a_0^{N-j}}{\sqrt{(N-j)!}} \right), \quad U_N^*= \bigoplus_{j=0}^N \left(\frac{(a_0^*)^{N-j}}{\sqrt{(N-j)!}} \right).
\end{align}
Recall from  \cite[Proposition 4.2]{LewNamSerSol-15} that 
\begin{align} \label{eq:action}
U_N a_p^* a_q U_N^*&=  a_p^* a_q,\quad U_N a_p^* a_0  U_N^* =  a_p^* \sqrt{N-\cN_+}, \quad \forall p,q \ne 0
\end{align}
where $\cN_+$ is the number operator on the excited Fock space $\cF_+$,
$$
\cN_+ = \sum_{p\ne 0} a_p^* a_p.
$$
Thus $U_N$ implements the c-number substitution in Bogoliubov's argument because it replaces $a_0$ by $\sqrt{N-\cN_+}\approx \sqrt{N}$ (we have $\cN_+\ll N$ due to the condensation). Then the formal expression \eqref{eq:Bog-formal} can be reformulated as
\bq \label{eq:Bog-UN-op}
U_N H_N U_N^* \approx \frac{N}{2}\widehat w(0) + \bH_{\rm Bog}
\eq
which is rigorous since the operators on both sides act on the same excited Fock space. By justifying \eqref{eq:Bog-UN-op}, the authors of \cite{LewNamSerSol-15} recovered the convergence of eigenvalues of $H_N$ first obtained in \cite{Seiringer-11}, and also  obtained the convergence of eigenfunctions of $H_N$ to those of $\bH_{\rm Bog}$. In particular, for the ground state, we have from  \cite[Theorem 2.2]{LewNamSerSol-15} that 
\bq \label{eq:Bog-norm}
\lim_{N\to\infty}U_N \Psi_N = U_B |0\rangle
\eq
where $|0\rangle$ is the vacuum in Fock space. The convergence \eqref{eq:Bog-norm} holds strongly in norm of $\cF_+$, and also strongly in the norm induced by the  quadratic form of $\bH_{\rm Bog}$ in $\cF_+$. In particular, this  implies the convergence of one-body density matrix
\bq \label{eq:Bog-DM}
\lim_{N\to  \infty} Q \gamma_{\Psi_N}^{(1)} Q =  \sum_{p\ne 0} \gamma_p^2 |u_p\rangle \langle u_p|
\eq
in trace class (see  \eqref{eq:QgQ-proof} for a  detailed explanation). 
Since $\Tr \gamma_{\Psi_N}^{(1)}=N$,  \eqref{eq:Bog-DM} is equivalent to 
\begin{align} \label{eq:DM-weak}
\lim_{N\to \infty} \Tr \Big| P \gamma^{(1)}_{\Psi_N} P + Q \gamma^{(1)}_{\Psi_N} Q - \Big(N-\sum_{p\ne 0} \gamma_p^2\Big) |u_0\rangle \langle u_0| - \sum_{p\ne 0} \gamma_p^2 |u_p\rangle\langle u_p| \Big| =0. 
\end{align}
Recall that $P=|u_0\rangle \langle u_0|=1-Q$. The formula \eqref{eq:DM-weak}  looks similar to the result in Theorem \ref{thm:main-1}, except that the cross term $P \gamma_{\Psi_N}^{(1)}Q + Q \gamma_{\Psi_N}^{(1)}P$ is missing. Putting differently, to get the result in Theorem \ref{thm:main-1} we have to show that  
\bq \label{eq:DM-strong}
\lim_{N\to \infty} \Tr \Big| P \gamma_{\Psi_N}^{(1)}Q + Q \gamma_{\Psi_N}^{(1)}P \Big|=0.  
\eq

As explained in \cite[Eq. (2.19)]{LewNamSerSol-15},  from \eqref{eq:Bog-DM} and the Cauchy-Schwarz inequality one only obtains that the left side of \eqref{eq:DM-strong} is of order $O(\sqrt{N})$. Moreover, \eqref{eq:DM-strong} implies that
\bq\label{eq:DM-strong-2}
 \lim_{N\to \infty} \sqrt{N} \langle U_N \Psi_N, a_p U_N \Psi_N\rangle  =0 , \quad \forall p\ne 0,
 \eq 
 thus answering an open question in \cite{Nam-18}. As explained in \cite[Section 5]{Nam-18}, \eqref{eq:DM-strong-2} would follow if we could replace $U_N \Psi_N$ by $U_B |0\rangle$ (which is a {quasi-free state}, and thus satisfies Wick's Theorem \cite[Chapter 10]{Solovej-14}). However, the norm convergence \eqref{eq:Bog-norm} is not strong enough to justify \eqref{eq:DM-strong-2}. 
 
\subsection{Outline of the proof}  To prove Theorem \ref{thm:main-1} we have to extract some information going beyond Bogoliubov's approximation. Roughly speaking, we will refine \eqref{eq:Bog-UN-op} by computing exactly the term of order $O(N^{-1})$. Our proof consists of three main steps. 

\bigskip
\noindent
{\bf Step 1 (Excitation Hamiltonian).} After implementing the c-number substitution, instead of ignoring all terms with coefficients of order $o(1)_{N\to \infty}$, we will keep all terms of order $O(N^{-1})$. More precisely, in Lemma \ref{lem:Bog-app} below we show that 
\begin{align} \label{eq:lem:Bog-app-intro}
U_N H_{N} U_N^*=  \frac{N}{2} \widehat{w}(0)+ \cG_N + O(N^{-3/2})
\end{align}
in an appropriate sense, where
\begin{align*}
\cG_N &= \bH_{\rm Bog}  + \frac{\cN_+(1-\cN_+)}{2(N-1)} \widehat{w}(0)+ \sum_{p\neq 0}   \frac{1-\cN_+}{N-1} \widehat{w}(p) a^*_p a_p + \left(\frac12 \sum_{p\neq 0} \widehat{w}(p) a^*_p a^*_{-p} \frac{1-2\cN_+}{2N}+\,\, \text{h.c.}\right)\\
&\quad + \left(\frac{1}{\sqrt{N}} \sum_{\ell,p\neq 0, \ell+p\neq 0} \widehat{w}(\ell) a^*_{p+\ell} a^*_{-\ell} a_{p} +\,\, \text{h.c.} \right)   + \frac{1}{2(N-1)}\sum_{\substack{k,p \neq 0\\ \ell\neq -p,k}}\widehat{w}(\ell) a^*_{p+\ell} a^*_{k-\ell} a_p a_k. 
\end{align*}

The formula \eqref{eq:lem:Bog-app-intro} is obtained by a direct computation using the actions of $U_N$ as in \cite{LewNamSerSol-15}, plus an expansion of $\sqrt{N-\cN_+}$ and $\sqrt{(N-\cN_+)(N-\cN_+-1)}$ in the regime $\cN_+ \ll N$. The advantage of using $\cG_N$ is that it is well-defined on the full Fock space $\cF_+$. This idea has been used to study the norm approximation for the many-body quantum dynamics in  \cite{BNNS-19}.

\bigskip
\noindent
{\bf Step 2 (Quadratic transformation).} Then we conjugate the operator on the right side of \eqref{eq:lem:Bog-app-intro} by the Bogoliubov transformation $U_B$ in \eqref{eq:UB-intro}. In Lemma \ref{lem:quadratic-transformation} we prove that 
\bq \label{eq:quadratic-transformation-intro}
U_B \cG_N U_B^* = \langle0| U_B \cG_N U_B^* |0\rangle + \sum_{p\ne 0} e(p) a_p^* a_p  +  \cC_N  + R_2 
\eq
where 
$$
\cC_N= \frac{1}{\sqrt{N}}\sum_{\substack{p,q\neq 0\\ p+q\neq 0}}\widehat{w}(p) \Big[ (\sigma_{p+q}\sigma_{-p}\gamma_q +\gamma_{p+q}\gamma_{p}\sigma_q ) a^*_{p+q} a^*_{-p} a^*_{-q} +\rm{h.c.} \Big]
$$
and $R_2$ is an error term whose expectation against the ground state is of order $O(N^{-3/2})$.

Note that in $\cC_N$ we keep only cubic terms with three creation operators or three annihilation operators. These are the most problematic terms. 
All other cubic terms, as well as all quartic terms, are of lower order and can be estimated by the Cauchy--Schwarz inequality (the quartic terms always come with a factor $N^{-1}$ instead of $N^{-1/2}$ and this helps).

As we will see, the energy contribution of the cubic term $\cC_N$ is of order $O(N^{-1})$. Thus \eqref{eq:quadratic-transformation-intro} implies that 
\bq \label{eq:Phi'-E}
E_N = \frac{N}{2}\widehat w(0) +  \Big\langle 0 \Big|  U_B \cG_N U_B^*   \Big| 0 \Big\rangle + O(N^{-1})
\eq
which improves \eqref{eq:EN-GSE}. Moreover, for the ground state we have
\bq \label{eq:Phi'}
\langle U_B U_N \Psi_N,   \cN_+ U_B U_N \Psi_N \rangle \le CN^{-1}
\eq
which in turn implies the norm approximation (up to an appropriate choice of the phase factor for $\Psi_N$)
\bq \label{eq:Phi'-norm}
\| U_N \Psi_N - U_B^* |0\rangle \|_{\cF_+}^2 \le CN^{-1}. 
\eq 
and the following bound on the one-body density matrix 
$$
\lim_{N\to \infty} \Tr \Big| P \gamma_{\Psi_N}^{(1)}Q + Q \gamma_{\Psi_N}^{(1)}P \Big| \le C.
$$
Unfortunately the latter bound is  still weaker than \eqref{eq:DM-strong}. Thus the desired result \eqref{eq:DM-strong} cannot be obtained within Bogoliubov's theory. 

\bigskip
\noindent
{\bf Step 3 (Cubic transformation).} To factor out the energy contribution of the cubic term $\cC_N$ in \eqref{eq:quadratic-transformation-intro}, we will use a cubic transformation. It is given by 
\begin{equation}\label{eq:US-intro}
 U_S = e^{S}, \quad S=  \frac{1}{\sqrt{N}}\sum_{\substack{p,q\neq 0\\ p+q\neq 0}} \eta_{p,q}\Big(a^*_{p+q}a^*_{-p}a^*_{-q}\1^{\leq N}-\1^{\leq N} a_{p+q}a_{-p}a_{-q}\Big) 
\end{equation}
where  
\begin{equation}\label{eq:etadef-intro}
\eta_{p,q}=\frac{\widehat{w}(p)\big(\sigma_{p+q}\sigma_p \gamma_q +\gamma_{p+q}\gamma_p \sigma_q\big)}{e_{p+q}+e_p+e_q}.
\end{equation}
From the assumption $\widehat w \in \ell^1(2\pi \bZ^d)$ and the bounds \eqref{eq:sigma-gamma-w} we have the summability  
\bq \label{eq:US-summability}
\sum_{p,q \ne 0} |\eta_{p,q}| \le C. 
\eq
Here we insert the cut-off $\1^{\leq N}$ in the definition of $U_S$ to make sure that it does not change the particle number operator $\cN_+$ too much; see Lemma \ref{lem:moment-US} for details. 

The choice of the cubic transformation above can be deduced on an abstract level. Consider an operator of the form 
$$
A= A_0 + X
$$
where $X$ stands for some perturbation. Then, in principle, we can remove $X$ by conjugating $A$ with $e^{S}$ provided that
$$
X + [S,A_0]=0
$$
and that $[S,[S,A_0]]=-[S,X]$ is small in an appropriate sense. This can be seen by the simple expansions
$$
e^{S} X e^{-S} = X +  \int_0^1 e^{sS} [X,S] e^{-sS} ds
$$
and
$$
e^{S} A_0 e^{-S} = A_0 + [S,A_0] + \int_0^1 \int_0^t e^{sS} [S,[S,A_0]] e^{-sY} ds dt. 
$$
In our situation, $A_0= \sum_{p\ne 0} e(p) a_p^* a_p$ and $X=\cC_N$, allowing to find $S$ explicitly in \eqref{eq:US-intro}. 

In Lemma \ref{lem:cubic-transformation} we prove that 
$$
U_S U_B \cG_N U_B^* U_S^*= \Big\langle 0 \Big| U_S U_B \cG_N U_B^*  U_S^* \Big| 0 \Big\rangle + \sum_{p\ne 0} e(p) a_p^*a_p + R_3  
$$
with an error term $R_3$ whose expectation against the ground state is of order $O(N^{-3/2})$.  This allows us to obtain the following improvements of \eqref{eq:Phi'-E}, \eqref{eq:Phi'} and \eqref{eq:Phi'-norm}. 

\begin{theorem}[Refined ground state estimates]\label{thm:main-2} Assume that $0\le \widehat w\in \ell^1( (2\pi \mathbb{Z})^d )$. Then the ground state energy of the Hamiltonian $H_N$ in \eqref{eq:HN-intro} satisfies 
$$
E_N=  \frac{N}{2}\widehat w(0)  + \Big\langle 0 \Big| U_S U_B \cG_N U_B^*  U_S^* \Big| 0 \Big\rangle + O(N^{-3/2}).
$$
Moreover, if $\Psi_N$ is the ground state of $H_N$, then $\Phi= U_S U_B U_N \Psi_N$ satisfies 
$$
\langle \Phi, \cN_+ \Phi\rangle \le CN^{-3/2}. 
$$
Consequently, we have the norm approximation (up to an appropriate choice of the phase factor for $\Psi_N$)
$$
\| U_N \Psi_N - U_B^* U_S^* |0\rangle \|_{\cF_+}^2 \le CN^{-3/2}. 
$$
\end{theorem}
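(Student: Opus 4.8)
The plan is to derive all three statements from the cubic transformation identity of Lemma~\ref{lem:cubic-transformation} together with a matching variational upper bound, in the spirit of~\cite{LewNamSerSol-15} but pushed one order further in $N$. Throughout I write $\Phi = U_S U_B U_N \Psi_N \in \cF_+$, so that $U_N \Psi_N = U_B^* U_S^* \Phi$ and $\|\Phi\|_{\cF_+} = 1$, and abbreviate $\theta_N := \langle 0 | U_S U_B \cG_N U_B^* U_S^* |0\rangle$. A preliminary task is to record the uniform a priori moment bounds $\langle U_N \Psi_N, (\cN_+^k + \cK_+) U_N \Psi_N\rangle \le C_k$ (for each fixed $k$), where $\cK_+ = \sum_{p\ne 0} p^2 a_p^* a_p$: for $k$ fixed these follow from the leading-order analysis of~\cite{LewNamSerSol-15} --- the convergence $U_N\Psi_N \to U_B|0\rangle$ in the form norm of $\bH_{\rm Bog}$ controls $\cN_+$ and $\cK_+$, and higher powers come from the usual bootstrap using $H_N\Psi_N = E_N\Psi_N$. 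By unitarity of $U_B$ together with Lemma~\ref{lem:moment-US} the same bounds transfer to $\Phi$, and combining~\eqref{eq:Phi'} (the Bogoliubov-level estimate coming from Lemma~\ref{lem:quadratic-transformation}) with Lemma~\ref{lem:moment-US} gives the preliminary bound $\langle \Phi, \cN_+ \Phi\rangle \le C N^{-1}$.

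\emph{Energy upper bound.} As trial state I would take $\Psi_{\rm tr} = U_N^* \xi_N / \|\xi_N\|$ with $\xi_N := \1^{\le N} U_B^* U_S^* |0\rangle \in \cF_+^{\le N}$. Since conjugating $\cN_+^k$ by $U_B$ and then by $U_S$ (via Lemma~\ref{lem:moment-US}) keeps all vacuum expectations finite, the tail $(1-\1^{\le N}) U_B^* U_S^* |0\rangle$ has norm --- and $(\cN_+^k + \cK_+)$-moments --- that are $O(N^{-j})$ for every $j$, so in particular $\|\xi_N\|^2 = 1 + O(N^{-j})$ for every $j$. Then, by Lemma~\ref{lem:Bog-app} and the moment bounds (the normalization contributing only $O(N^{-j})$),
\begin{align*}
\langle \Psi_{\rm tr}, H_N \Psi_{\rm tr}\rangle = \Big\langle \xi_N, \Big( \tfrac N2 \widehat w(0) + \cG_N \Big) \xi_N \Big\rangle + O(N^{-3/2}),
\end{align*}
and replacing $\xi_N$ by $U_B^* U_S^* |0\rangle$ at negligible cost (using that $\cG_N$ is controlled by a polynomial in $\cN_+$ and $\cK_+$) gives $E_N \le \tfrac N2 \widehat w(0) + \theta_N + O(N^{-3/2})$.

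\emph{Energy lower bound, $\cN_+$ estimate, and norm approximation.} For the ground state itself I would conjugate $E_N = \langle U_N\Psi_N, (U_N H_N U_N^*)\, U_N \Psi_N\rangle$ by $U_S U_B$: Lemma~\ref{lem:Bog-app} (its error absorbed into $O(N^{-3/2})$ thanks to the a priori moments) followed by Lemma~\ref{lem:cubic-transformation} yields
\begin{align*}
E_N = \tfrac N2 \widehat w(0) + \theta_N + \sum_{p\ne 0} e(p) \langle \Phi, a_p^* a_p \Phi\rangle + \langle \Phi, R_3 \Phi\rangle + O(N^{-3/2}).
\end{align*}
Because $p\in 2\pi\bZ^d$, $p\ne 0$ forces $e(p)\ge |p|^2 \ge (2\pi)^2$, so the term $\sum_{p\ne 0} e(p)\langle\Phi,a_p^* a_p\Phi\rangle$ is $\ge (2\pi)^2 \langle\Phi,\cN_+\Phi\rangle \ge 0$; since $\langle\Phi, R_3\Phi\rangle = O(N^{-3/2})$ by Lemma~\ref{lem:cubic-transformation}, the displayed identity gives $E_N \ge \tfrac N2\widehat w(0) + \theta_N + O(N^{-3/2})$, which together with the upper bound proves the energy expansion. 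Substituting the energy expansion back into the identity then forces $\sum_{p\ne 0} e(p)\langle\Phi, a_p^* a_p\Phi\rangle = O(N^{-3/2})$, hence $\langle\Phi,\cN_+\Phi\rangle \le C N^{-3/2}$, which is the second claim. For the last claim I would decompose $\Phi = \lambda|0\rangle + \Phi_+$ with $\Phi_+ \perp |0\rangle$; then $\|\Phi_+\|^2 \le \langle\Phi,\cN_+\Phi\rangle \le C N^{-3/2}$ and $|\lambda|^2 = 1 - \|\Phi_+\|^2$, and fixing the (still free) phase of $\Psi_N$ so that $\lambda\ge 0$ gives $\lambda = 1 + O(N^{-3/2})$, whence $\|\Phi - |0\rangle\|_{\cF_+}^2 = |\lambda-1|^2 + \|\Phi_+\|^2 \le C N^{-3/2}$. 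Since $U_N\Psi_N = U_B^* U_S^* \Phi$ and $U_B^* U_S^*$ is unitary on $\cF_+$, we conclude $\|U_N\Psi_N - U_B^* U_S^* |0\rangle\|_{\cF_+}^2 = \|\Phi - |0\rangle\|_{\cF_+}^2 \le C N^{-3/2}$; that $U_B^* U_S^*|0\rangle$ may fall outside $\cF_+^{\le N}$ is irrelevant for this norm estimate.

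The genuine difficulty lies not in this assembly but in its inputs, above all in Lemma~\ref{lem:cubic-transformation}: one must show that after the cubic conjugation the remainder $R_3$ --- which collects the double commutator $[S,[S,\sum_{p\ne 0} e(p) a_p^* a_p]]$, the commutator $[S,\cC_N]$, the subleading cubic and all quartic pieces of $\cG_N$, and the extra terms generated by the cutoff $\1^{\le N}$ inside $S$ --- really has expectation $O(N^{-3/2})$ against the transformed ground state $\Phi$. This requires commutator and Cauchy--Schwarz estimates that trade the $N^{-1/2}$ prefactors of $S$ and $\cC_N$ against the a priori bounds $\langle\Phi,\cN_+\Phi\rangle \le C N^{-1}$ and $\langle\Phi,\cK_+\Phi\rangle \le C$, quite possibly via a short self-improving iteration of the $\cN_+$ bound, and it is precisely why the auxiliary lemmas must be stated as expectation bounds against states with controlled particle number and kinetic energy. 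The remaining point to watch is purely bookkeeping: verifying that the errors coming from Lemma~\ref{lem:Bog-app}, from the truncation $\1^{\le N}$, and from $R_3$ genuinely combine into a single $O(N^{-3/2})$.
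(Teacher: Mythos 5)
Your overall architecture coincides with the paper's: same trial state $U_N^*\1^{\le N}U_B^*U_S^*|0\rangle$ for the upper bound, same conjugation chain Lemma \ref{lem:Bog-app} $\to$ Lemma \ref{lem:cubic-transformation} for the lower bound, extraction of $\langle\Phi,\cN_+\Phi\rangle\le CN^{-3/2}$ by comparing the two bounds, and the same phase-fixing argument for the norm approximation. However, there is a genuine gap in your assembly: you assert that $\langle\Phi,R_3\Phi\rangle=O(N^{-3/2})$ ``by Lemma \ref{lem:cubic-transformation}.'' The lemma does not say this; it gives the operator bound $\pm R_3\le C\cN_+^2/\sqrt N+C(\cN_++1)^4/N^{3/2}$. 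At the point where you invoke it, the only available information on $\Phi$ is the Bogoliubov-level bound $\langle\Phi,\cN_+\Phi\rangle\le CN^{-1}$ together with bounded higher moments, and interpolation then yields only $\langle\Phi,\cN_+^2\Phi\rangle\le CN^{-2/3}$, hence $\langle\Phi,R_3\Phi\rangle=O(N^{-7/6})$ — not $O(N^{-3/2})$. Using the conclusion $\langle\Phi,\cN_+\Phi\rangle\le CN^{-3/2}$ to upgrade this would be circular, and your subsequent step ``substituting the energy expansion back forces $\sum_{p\ne0}e(p)\langle\Phi,a_p^*a_p\Phi\rangle=O(N^{-3/2})$'' needs exactly the same unproved input. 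The paper avoids this by \emph{not} estimating $\langle\Phi,R_3\Phi\rangle$ separately: it splits $\cN_+^2/\sqrt N\le \eps\cN_+ + C_\eps\cN_+^4/N^{3/2}$ and absorbs $\eps\cN_+$ into the spectral gap $\sum_{p\ne0}e(p)a_p^*a_p\ge(2\pi)^2\cN_+$, so the lower bound retains a term $+\langle\Phi,\cN_+\Phi\rangle$ intact; comparing with the upper bound then yields both the energy expansion and the $\cN_+$ estimate simultaneously, with no circularity. (Alternatively a finite bootstrap with moments of order $\ge 5$ would close your argument, but you only gesture at this in the context of proving Lemma \ref{lem:cubic-transformation}, which is given, not in the assembly where it is actually needed.)

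Two smaller deviations in the upper bound are worth flagging. First, you replace $\xi_N=\1^{\le N}U_B^*U_S^*|0\rangle$ by $U_B^*U_S^*|0\rangle$ inside $\langle\cdot,\cG_N\cdot\rangle$ ``at negligible cost, using that $\cG_N$ is controlled by a polynomial in $\cN_+$ and $\cK_+$''; this needs moment bounds on kinetic-type quantities under $U_S$, which the paper's lemmas do not provide (they control only $\cN_+$ moments). The paper sidesteps this by proving the extra operator inequality on all of $\cF_+$ in Lemma \ref{lem:Bog-app}, so the cut-off never has to be removed from $\cG_N$. Second, dividing by the normalization requires knowing a priori that $\langle0|U_SU_B\cG_NU_B^*U_S^*|0\rangle$ is $O(1)$; the paper obtains this indirectly from the lower bound together with the trivial estimate $E_N\le N\widehat w(0)/2$, a step your write-up does not address. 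Both points are repairable, but as written they are additional loose ends on top of the $R_3$ gap described above.
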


As we will explain, Theorem \ref{thm:main-2} implies  \eqref{eq:DM-strong} and thus justifies Theorem \ref{thm:main-1}.

The idea of using cubic transformations has been developed to handle dilute Bose gases in \cite{YauYin-09,BBCS-19,BBCS-20,RadSch-19,ABS-20}, where the interaction potential has a much shorter range but the interaction strength is much larger in its range. In this case, the contribution of the cubic terms is much bigger, and Bogoliubov's approximation has to be modified appropriately to capture the short-range scattering effect. Results similar to  \eqref{eq:EN-GSE} have been proved recently for the Gross-Pitaevskii limit \cite{BBCS-19} and for the thermodynamic limit \cite{YauYin-09,FS19}. It is unclear to us how to extend Theorem \ref{thm:main-1} to the dilute regime. 

Our work shows that in the mean-field regime, in contrast to the dilute regime, the cubic terms are smaller, and they actually contribute only to  the next order correction to Bogoliubov's approximation (there are also some quadratic and quartic terms which contribute to the same order of the cubic term). On the other hand, it is interesting that the contribution of the cubic terms is not visible in the expansion of the one-body density matrix in Theorem \ref{thm:main-1}; putting differently the approximation in Theorem \ref{thm:main-1} can be guessed using only Bogoliubov's theory (although its proof requires more information). 

\medskip

There have been also remarkable works concerning higher order expansions in powers of $N^{-1}$ in the mean-field regime; see \cite{Pizzo-15} for a study of the ground state, \cite{BPS-20} for the low-energy spectrum, and \cite{BPPS-19,BPPS-20} for the quantum dynamics. These works are based on perturbative approaches which are very different from ours. Note that the method of Bossmann, Petrat and Seiringer in \cite{BPS-20} also gives access to the higher order expansion of the reduced density matrices   (see \cite[Eq. (3.15)]{BPS-20} for a comparison). We hope that our rather explicit strategy complements  the previous analysis in \cite{Pizzo-15,BPPS-19,BPPS-20,BPS-20} concerning the correction to Bogoliubov's theory in the mean-field regime.

\medskip

\subsection*{Organization of the paper} In Section  \ref{sec:N} we will derive some useful estimates for the particle number operator $\cN_+$. Then we analyze the actions of the transformations $U_N$, $U_B$, $U_S$ in Sections \ref{sec:UN}, \ref{sec:UB}, \ref{sec:US}, respectively. Finally, we prove Theorem \ref{thm:main-2} in Section \ref{sec:Thm-1} and conclude Theorem \ref{thm:main-1} in Section \ref{sec:Thm-1}.

\subsection*{Acknowledgments} We thank Robert Seiringer and Nicolas Rougerie for helpful discussions. The research is funded by the Polish-German Beethoven Classic 3 project ``Mathematics of many-body quantum systems". PTN acknowledges the support from the Deutsche Forschungsgemeinschaft (DFG project Nr. 426365943).  MN acknowledges the support from the National Science Centre (NCN project Nr. 2018/31/G/ST1/01166).

\section{Moment estimates for the particle number operator} \label{sec:N}

In this section we justify the Bose-Einstein condensation by showing that the ground state has a bounded number of excited particles. As explained in \cite{Seiringer-11}, the uniform bound on the expectation of $\cN_+$ follows easily from Onsager's inequality \eqref{eq:Onsager}. For our purpose, we will need uniform bounds for higher moments of $\cN_+$. The following lemma is an extension of \cite[Lemma 5]{Nam-18}.

\begin{lemma}[Number of excited particles] \label{lem:cN+}   If $\Psi_N$ is the ground state of $H_{N}$, then 
$$ \langle \Psi_N, \cN_+^s \Psi_N \rangle \le C_s, \qquad \forall  s\in \N.$$ 
\end{lemma}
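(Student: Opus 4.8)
The plan is to prove the moment bounds $\langle \Psi_N, \cN_+^s \Psi_N\rangle \le C_s$ by induction on $s$, following the strategy of \cite[Lemma 5]{Nam-18}. The base case $s=0$ is trivial (normalization), and $s=1$ is the standard fact recalled after \eqref{eq:Onsager}: testing the ground state energy inequality $\langle \Psi_N, H_N \Psi_N\rangle \le \langle u_0^{\otimes N}, H_N u_0^{\otimes N}\rangle = \tfrac{N}{2}\widehat w(0)$ against Onsager's inequality \eqref{eq:Onsager} gives $\sum_{p\ne 0} p^2 \langle \Psi_N, a_p^* a_p \Psi_N\rangle \le C$, and since $p^2 \ge c > 0$ for all $p\ne 0$ in $2\pi\bZ^d$, this yields $\langle \Psi_N, \cN_+ \Psi_N\rangle \le C$.

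For the inductive step, the natural device is to use the eigenvalue equation $H_N \Psi_N = E_N \Psi_N$ to control a commutator. The idea is to write, for a suitable bounded-below test operator $\cM$ (a truncated power of $\cN_+$, say $\cM = \cN_+^{s}$ or a regularized version $(\cN_+/(1+\eps\cN_+))^s$ to keep everything a priori finite), the identity
\begin{align*}
0 = \langle \Psi_N, [\cM, H_N]\Psi_N\rangle.
\end{align*}
Since $\cN_+$ commutes with the kinetic part and with $\widehat w(0)$-diagonal pieces, the commutator $[\cM, H_N]$ only picks up the off-diagonal interaction terms that change $\cN_+$, i.e. the terms with an unequal number of creation and annihilation operators acting on the excited modes. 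Working in the $U_N$-conjugated picture — equivalently using \eqref{eq:action} and the rewriting $U_N H_N U_N^* = \tfrac{N}{2}\widehat w(0) + \cG_N + O(N^{-3/2})$ from \eqref{eq:lem:Bog-app-intro} — the dangerous off-diagonal pieces are the pairing term $\sum_{p\ne0}\widehat w(p) a_p^* a_{-p}^* (\text{c-number}) + \text{h.c.}$ and the cubic term $N^{-1/2}\sum \widehat w(\ell) a_{p+\ell}^* a_{-\ell}^* a_p + \text{h.c.}$; these raise or lower $\cN_+$ by $2$ and by $1$ respectively. One then estimates $\langle \Psi_N, [\cM, H_N]\Psi_N\rangle$ from below by isolating the positive diagonal contribution $\sum_{p\ne 0} p^2 \langle \Psi_N, a_p^*a_p \cN_+^{s-1}\Psi_N\rangle \gtrsim \langle \Psi_N, \cN_+^s \Psi_N\rangle$ coming from the kinetic energy commuted with $\cM$, and bounds the off-diagonal remainder by Cauchy–Schwarz: each creation/annihilation operator is controlled by $\cN_+^{1/2}$ (or $(\cN_++1)^{1/2}$), the $\ell^1$ summability of $\widehat w$ and of $\sum_p \gamma_p$ from \eqref{eq:sigma-gamma-w} absorbs the momentum sums, and the resulting terms are of the form $\delta \langle \cN_+^s\rangle + C_\delta \langle \cN_+^{s-1}\rangle$, where the first is absorbed into the left-hand side for $\delta$ small and the second is bounded by the induction hypothesis. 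Passing $\eps\to 0$ (monotone convergence) removes the regularization.

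The main obstacle is making the commutator computation rigorous and extracting a genuinely \emph{coercive} lower bound: a priori $\langle \Psi_N, \cN_+^s\Psi_N\rangle$ could be infinite, so one must work with the bounded regularization $\cM_\eps = f_\eps(\cN_+)$ with $f_\eps(t) = t^s/(1+\eps t)^s$ (or similar), verify that $[\cM_\eps, H_N]$ is a well-defined bounded-below quadratic form expression, and check that the commutator identities $[\,f(\cN_+), a_p^*\,] = (f(\cN_++1)-f(\cN_+))a_p^*$ combine to give the right sign on the diagonal while the cross terms remain controllable uniformly in $\eps$. The bookkeeping with the cubic term $\cC_N$-type contributions is slightly delicate because it changes $\cN_+$ by one and the difference $f_\eps(\cN_++1)-f_\eps(\cN_+) \sim s \cN_+^{s-1}$ must be paired correctly so that three creation/annihilation operators plus $\cN_+^{s-1}$ still only costs $\cN_+^{s-1/2}$-type quantities times $N^{-1/2}$; here the explicit $N^{-1/2}$ prefactor is what saves the estimate, exactly as remarked in the outline. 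Once the sign structure and the uniform-in-$\eps$ bounds are in place, the induction closes and the lemma follows.
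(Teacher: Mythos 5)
There is a genuine gap at the heart of your inductive step: the identity $0=\langle \Psi_N,[\cM,H_N]\Psi_N\rangle$ with $\cM=f(\cN_+)$ contains no coercive term at all. As you yourself note one sentence earlier, the kinetic energy $\sum_p p^2 a_p^*a_p$ commutes with every function of $\cN_+$, so the ``positive diagonal contribution $\sum_{p\ne0}p^2\langle \Psi_N,a_p^*a_p\cN_+^{s-1}\Psi_N\rangle$ coming from the kinetic energy commuted with $\cM$'' is identically zero; the commutator $[\cM,H_N]$ consists only of the off-diagonal interaction pieces, and its vanishing expectation is a constraint with no sign structure from which to extract a bound on $\langle\cN_+^{s}\rangle$ or $\langle\cN_+^{s+1}\rangle$. (Indeed, for self-adjoint $\cM$ and $H_N$ the expectation $\langle[\cM,H_N]\rangle$ is purely imaginary in any state, so most of its content is trivial.) The paper gets coercivity from a different algebraic device: sandwiching the Hamiltonian between two half-powers, i.e.\ using $(H_N-E_N)\Psi_N=0$ to write
\begin{align*}
\big\langle \Psi_N, \cN_+^{s/2}(H_N-E_N)\cN_+^{s/2}\Psi_N\big\rangle \;=\; \big\langle \Psi_N, \cN_+^{s/2}\,[H_N,\cN_+^{s/2}]\,\Psi_N\big\rangle ,
\end{align*}
and then bounding the left side from below by $(2\pi)^2\langle\cN_+^{s+1}\rangle - C\langle\cN_+^{s}\rangle$ via the Onsager-type operator inequality \eqref{eq:leading-lower} together with $E_N\le \tfrac N2\widehat w(0)$. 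Here the kinetic coercivity enters through the operator lower bound on $H_N-E_N$ placed \emph{between} the two factors $\cN_+^{s/2}$, not through any commutator, while the single commutator $[H_N,\cN_+^{s/2}]$ (which carries the prefactor $(N-1)^{-1}$ from the interaction) is the \emph{error} to be estimated by Cauchy--Schwarz. Your scheme as written cannot close because the quantity you set to zero never sees $\langle\cN_+^{s+1}\rangle$ with a definite sign.

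Two further points. First, the regularization $f_\eps(\cN_+)$ is unnecessary: on $\gH^N$ one has $\cN_+\le\cN=N$, so all moments are finite a priori and only uniformity in $N$ is at stake. Second, even granting a coercive identity, your claim that the remainder can be absorbed as $\delta\langle\cN_+^{s}\rangle + C_\delta\langle\cN_+^{s-1}\rangle$ is optimistic: in the paper the worst commutator terms (those of cubic type) are only controlled by $CN^{-1/2}\langle\cN_+^{s+1}\rangle^{1/2}\langle\cN_+^{s}\rangle^{1/2}$, and closing the induction requires an extra interpolation $\langle\cN_+^{s}\rangle\le\langle\cN_+^{s-1}\rangle^{1/2}\langle\cN_+^{s+1}\rangle^{1/2}$ plus the low moments $s=1,2,3$ from \cite[Lemma 5]{Nam-18} as the base, rather than a plain step from $s-1$ to $s$.
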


\begin{proof} As in \cite[Lemma 5]{Nam-18}, from the operator inequality  
\bq \label{eq:leading-lower}
H_{N}  \ge (2\pi)^{2} \cN_+ +\frac{ N^2}{2(N-1)}\hat{w}(0) -  \frac{N}{2(N-1)} w(0) 
\eq
we obtain 
\bq \label{eq:bound-cN+}
|E_{N}| \le \frac{N}{2}\hat{w}(0) \quad \text{and} \quad\langle \Psi_N, \cN_+^{s} \Psi_N \rangle \le C
\eq
for $s=1,2,3$. Let us assume  that  $s\in\N$ is even. We will show that $\langle \Psi_N, \cN_+^{s+1} \Psi_N \rangle \le C$. 

Since $\Psi_N$ is a ground state of $H_N$, it solves the Schr\"odinger equation 
$$ H_{N}\Psi_N= E_{N}\Psi_N.$$
Consequently, we get the identity 
\begin{align} \label{eq:identity-cN+-HN}
\left\langle \Psi_N, \cN_+^{\frac{s}{2}} \Big( H_{N} - E_{N}  \Big) \cN_+^{\frac{s}{2}} \Psi_N \right\rangle =  \Big\langle \Psi_N,   \cN_+^{\frac{s}{2}} [H_{N}, \cN_+^{\frac{s}{2}}]   \Psi_N \Big\rangle.
\end{align}
The left side of \eqref{eq:identity-cN+-HN} can be estimated using \eqref{eq:leading-lower} and \eqref{eq:bound-cN+} as
\bq \label{eq:identity-cN+-HN-left}
 \left\langle \Psi_N, \cN_+^{\frac{s}{2}} \Big( H_{N} - E_{N}  \Big) \cN_+^{\frac{s}{2}} \Psi_N \right\rangle \ge \left\langle \Psi_N, \Big( (2\pi)^{2}   \cN_+^{s+1} -C\cN_+^s \Big) \Psi_N \right\rangle.
\eq
For the right side of \eqref{eq:identity-cN+-HN}, since
$$[A,B^k]=\sum_{j=0}^{k-1}B^j [A,B] B^{k-j-1},$$
 using \eqref{eq:grand-Hamiltonian} and the CCR \eqref{eq:CCR} we write 
\begin{align} \label{eq:identity-cN+-HN-right-0}
 \cN_+^{\frac{s}{2}}[H_{N},& \cN_+^{\frac{s}{2}}] =\frac{1}{2(N-1)} \sum_{j=0}^{\frac{s}{2}-1} \sum_{\ell \ne 0} \sum_{p,q } \widehat w(\ell) \cN_+^{\frac{s}{2}+j}[a^*_{p-\ell}a^*_{q+\ell}a_p a_q, \cN_+]\cN_+^{\frac{s}{2}-j-1} \nn\\
&=\frac{1}{2(N-1)} \sum_{j=0}^{\frac{s}{2}-1}\sum_{\ell \ne 0} \widehat w(\ell) \cN_+^{\frac{s}{2}+j} \Big( 2 a^*_0 a^*_0 a_{\ell} a_{-\ell} -2 a^*_{-\ell} a^*_{\ell} a_0 a_0   \Big)\cN_+^{\frac{s}{2}-j-1} \nn\\
& + \frac{1}{2(N-1)} \sum_{j=0}^{\frac{s}{2}-1}\sum_{\ell \ne 0 \ne p \ne \ell} \widehat w(\ell) \cN_+^{\frac{s}{2}+j} \Big(a^*_{p-\ell}a^*_{0}a_p a_{-\ell}  - a^*_{p-\ell}a^*_{\ell}a_p a_0  \Big)\cN_+^{\frac{s}{2}-j-1} \nn\\
& + \frac{1}{2(N-1)}\sum_{j=0}^{\frac{s}{2}-1} \sum_{\ell \ne 0 \ne q \ne -\ell} \widehat w(\ell) \cN_+^{\frac{s}{2}+j} \Big(a^*_{0}a^*_{q+\ell}a_{\ell} a_{q}  - a^*_{-\ell}a^*_{q+\ell}a_0 a_q  \Big)\cN_+^{\frac{s}{2}-j-1}.
\end{align}
Now we take the expectation against $\Psi_N$ and estimate. For the first term on the right side of \eqref{eq:identity-cN+-HN-right-0}, by the Cauchy--Schwarz inequality, we get for a given $j$
\begin{align*}
& \left| \left\langle \Psi_N, \sum_{\ell \ne 0} \widehat w(\ell) \cN_+^{\frac{s}{2}+j}  a^*_0 a^*_0 a_{\ell} a_{-\ell}\cN_+^{\frac{s}{2}-j-1} \Psi_N \right\rangle \right| \\
&\qquad = \left| \left\langle \Psi_N, \sum_{\ell \ne 0} \widehat w(\ell) \cN_+^{\frac{s}{2}+j}  a^*_0 a^*_0 (\cN_++1)^{-j}(\cN_++1)^{j} a_{\ell} a_{-\ell}\cN_+^{\frac{s}{2}-j-1} \Psi_N \right\rangle \right|\\ 
&\qquad \le \sum_{\ell \ne 0}   \left\| (\cN_++1)^{-j} a_0 a_0 \cN_+^{\frac{s}{2}+j}   \Psi_N \right\| | \widehat w(\ell)| \left\|(\cN_+ +1)^{j} a_{\ell} a_{-\ell} \cN_+^{\frac{s}{2}-j-1}  \Psi_N\right\| \\
&\qquad = \sum_{\ell \ne 0}   \left\|  a_0 a_0 \cN_+^{\frac{s}{2}}   \Psi_N \right\| | \widehat w(\ell)| \left\| a_{\ell} a_{-\ell} (\cN_+ -1)^{j} \cN_+^{\frac{s}{2}-j-1}  \Psi_N\right\|\\
&\qquad \le \left\|  a_0 a_0 \cN_+^{\frac{s}{2}}    \Psi_N \right\|   \left( \sum_{\ell \ne 0} |\widehat w(\ell)|^2 \right)^{1/2} \left( \sum_{\ell \ne 0}  \left\| a_{\ell} a_{-\ell}(\cN_+ -1)^{j} \cN_+^{\frac{s}{2}-j-1}   \Psi_N\right\| ^2 \right)^{1/2}\\
&\qquad \leq CN \langle \Psi_N, \cN_+^{s} \Psi_N\rangle.
\end{align*}
Here we have used that $a_0 a_0$ commutes with $\cN_+$, that  $a_0^*a_0 \le N$ on $\gH^N$ and that $\sum |\widehat w(\ell)|^2=\|w\|_{L^2}^2<\infty$. Similarly, for the second term, we have 
\begin{align*}
&\left| \left\langle \Psi_N, \sum_{\ell \ne 0} \widehat w(\ell) \cN_+^{\frac{s}{2}+j}   a^*_{-\ell} a^*_\ell a_0 a_0 \cN_+^{\frac{s}{2}-j-1}  \Psi_N \right\rangle  \right| \\
&\quad = \left| \left\langle \Psi_N, \sum_{\ell \ne 0} \widehat w(\ell) \cN_+^{\frac{s}{2}+j}   a^*_{-\ell} a^*_\ell(\cN_+ +1)^{-j-1}(\cN_+ +1)^{j+1} a_0 a_0 \cN_+^{\frac{s}{2}-j-1}  \Psi_N \right\rangle  \right| \\
& \leq \sum_{\ell \ne 0}   \left\| (\cN_++1)^{-j-1} a_{-\ell} a_{\ell} \cN_+^{\frac{s}{2}+j}   \Psi_N \right\| | \widehat w(\ell)| \left\|(\cN_+ +1)^{j+1} a_{0} a_{0} \cN_+^{\frac{s}{2}-j-1}  \Psi_N\right\| \\ & \leq CN \langle \Psi_N, (\cN_++1)^{s} \Psi_N\rangle
\end{align*}
as before. For the third term, we can bound
\begin{align*}
&\left| \left\langle \Psi_N, \sum_{\ell \ne 0 \ne p \ne \ell}  \widehat w(\ell) \cN_+^{\frac{s}{2}+j} a^*_{p-\ell}a^*_{0}a_p a_{-\ell} \cN_+^{\frac{s}{2}-j-1} \Psi_N \right\rangle \right| \\ 
&\le \sum_{\ell \ne 0 \ne p \ne \ell} |\widehat w(\ell)| \|(\cN_++1)^{-j} a_{0} a_{p-\ell} \cN_+^{\frac{s}{2}+j} \Psi_N\| \|(\cN_++1)^{j}  a_p a_{-\ell}\cN_+^{\frac{s}{2}-j-1}   \Psi_N \|  \\
&\le \left( \sum_{\ell \ne 0 \ne p \ne \ell}  |\widehat w(\ell)|^2 \left\| a_{0} a_{p-\ell} \cN_+^{\frac{s}{2}} \Psi_N \right\|^2  \right)^{1/2} \left(   \sum_{\ell \ne 0 \ne p \ne \ell} \left\| a_p a_{-\ell}(\cN_+-1)^{-j}    \cN_+^{\frac{s}{2}-j-1}    \Psi_N \right\|^2 \right)^{1/2} \\
& \le C N^{1/2}\langle \Psi_N, \cN_+^s \Psi_N \rangle^{1/2} \langle \Psi_N, \cN_+^{s+1} \Psi_N\rangle^{1/2}
\end{align*}
and proceed similarly for other terms. Thus in summary, from \eqref{eq:identity-cN+-HN-right-0} we get
\begin{align} \label{eq:identity-cN+-HN-right}
\left| \Big\langle \Psi_N, \cN_+^{\frac{s}{2}}[H_{N},\cN_+^{\frac{s}{2}}] \Psi_N \Big\rangle \right| &\le C \langle \Psi_N, (\cN_+ +1)^s \Psi_N\rangle \nn\\
&\quad + C N^{-1/2}\langle \Psi_N, \cN_+^{s+1} \Psi_N \rangle^{1/2} \langle \Psi_N, \cN_+^s \Psi_N\rangle^{1/2}.
\end{align}
 Inserting \eqref{eq:identity-cN+-HN-left} and \eqref{eq:identity-cN+-HN-right} into \eqref{eq:identity-cN+-HN}, we obtain
\begin{align*}
\left\langle \Psi_N, \Big( (2\pi)^{2}   \cN_+^{s+1} - C\cN_+^s \Big) \Psi_N \right\rangle &\le C \langle \Psi_N, (\cN_+ +1)^s \Psi_N\rangle \nn\\
&\quad + C N^{-1/2}\langle \Psi_N, \cN_+^{s+1} \Psi_N \rangle^{1/2} \langle \Psi_N, \cN_+^s \Psi_N\rangle^{1/2}.
\end{align*}
By the Cauchy-Schwarz inequality
\bq \label{eq:cN^sbound}
\left\langle \Psi_N, \cN_+^s \Psi_N \right\rangle \le 
\left\langle \Psi_N, \cN_+^{s-1}  \Psi_N \right\rangle^{1/2} \left\langle \Psi_N, \cN_+^{s+1}  \Psi_N \right\rangle^{1/2}
\eq
 we get 
 \begin{align*}
\left\langle \Psi_N,   \cN_+^{s+1} \Psi_N \right\rangle &\le C \left\langle \Psi_N, \cN_+^{s-1}  \Psi_N \right\rangle^{1/2} \left\langle \Psi_N, \cN_+^{s+1}  \Psi_N \right\rangle^{1/2} \nn\\
&\quad + C N^{-1/2}\langle \Psi_N, \cN_+^{s+1} \Psi_N \rangle^{3/4} \langle \Psi_N, \cN_+^{s-1} \Psi_N\rangle^{1/4}
\end{align*}
which implies
\begin{align*}
\left\langle \Psi_N,   \cN_+^{s+1} \Psi_N \right\rangle^{1/2} &\le C \left\langle \Psi_N, \cN_+^{s-1}  \Psi_N \right\rangle^{1/2} \nn\\
&\quad + C N^{-1/2}\langle \Psi_N, \cN_+^{s+1} \Psi_N \rangle^{1/4} \langle \Psi_N, \cN_+^{s-1} \Psi_N\rangle^{1/4}. 
\end{align*}
We can now use
$$\left\langle \Psi_N, \cN_+^{s-1} \Psi_N \right\rangle \le 
\left\langle \Psi_N, \cN_+^{s-3}  \Psi_N \right\rangle^{1/2} \left\langle \Psi_N, \cN_+^{s+1}  \Psi_N \right\rangle^{1/2}$$
and obtain
\begin{align*}
\left\langle \Psi_N,   \cN_+^{s+1} \Psi_N \right\rangle^{1/4} &\le C \left\langle \Psi_N, \cN_+^{s-3}  \Psi_N \right\rangle^{1/4} \nn\\
&\quad + C N^{-1/2}\langle \Psi_N, \cN_+^{s+1} \Psi_N \rangle^{1/8}\left\langle \Psi_N, \cN_+^{s-3}  \Psi_N \right\rangle^{1/8} .
\end{align*}
Telescoping this inequality and using \cite[Lemma 5]{Nam-18} we arrive at a bound on $\left\langle \Psi_N,   \cN_+^{s+1} \Psi_N \right\rangle$ that is uniform in $N$. This gives the desired result for odd powers of $\cN_+$. Finally, using \eqref{eq:cN^sbound}, we obtain the bound for any $s\in \N$ and this ends the proof.
\end{proof}

In order to put Lemma \ref{lem:cN+} in a good use, we will also need the fact that the moments of $\cN_+$ are essentially stable under the actions of the Bogoliubov transformation and the cubic transformation.

\begin{lemma}\label{lem:moment-UB} Let $U_B$ be given in  \eqref{eq:UB-intro}. Then  
\bq \label{eq:moment-UB}
U_{\rm B}\cN_+^{k}U_{\rm B}^* \le C_k (\cN_++1)^{k}, \quad \forall k\in \mathbb{N}.  
\eq
\end{lemma}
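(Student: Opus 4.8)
\emph{Proposed proof.} Write $U_B=e^{S_B}$ with the anti--selfadjoint generator $S_B=\sum_{p\ne0}\beta_p\big(a_p^*a_{-p}^*-a_pa_{-p}\big)$, and observe that since $\gamma_p=\sinh\beta_p\ge\beta_p$, the bound \eqref{eq:sigma-gamma-w} gives $\sum_{p\ne0}\beta_p\le C$ and $\sup_{p\ne0}\beta_p\le C$. My strategy would rest on two reductions. First, because $U_B$ is unitary, $U_B(\cN_++1)^k U_B^*=\mathcal M^k$ with $\mathcal M:=U_B(\cN_++1)U_B^*\ge0$, so if I can establish the inequality for all \emph{even} exponents it follows for all $k$: applying the even case with exponent $2k$ gives $\mathcal M^{2k}\le C_k(\cN_++1)^{2k}$, and operator monotonicity of $t\mapsto t^{1/2}$ (L\"owner--Heinz) yields $\mathcal M^k=(\mathcal M^{2k})^{1/2}\le\sqrt{C_k}\,(\cN_++1)^k$, whence \eqref{eq:moment-UB} since $\cN_+^k\le(\cN_++1)^k$. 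Second, for an even exponent $2l$ I would run a Gr\"onwall argument: with $\Psi_t=e^{tS_B}\Phi$ one has $\tfrac{d}{dt}\langle\Psi_t,(\cN_++1)^{2l}\Psi_t\rangle=-\langle\Psi_t,[S_B,(\cN_++1)^{2l}]\Psi_t\rangle$, so the whole matter is reduced to the commutator bound $|\langle\Xi,[S_B,(\cN_++1)^{2l}]\Xi\rangle|\le C_l\langle\Xi,(\cN_++1)^{2l}\Xi\rangle$.

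The core step is therefore to prove, for every $l\in\N$, that $[S_B,(\cN_++1)^l](\cN_++1)^{-l}$ is bounded, i.e. $\|[S_B,(\cN_++1)^l]\Xi\|\le C_l\|(\cN_++1)^l\Xi\|$. Using $[A,B^l]=\sum_{i=0}^{l-1}B^i[A,B]B^{l-1-i}$ together with $[S_B,\cN_+]=-2K$, $K=K_++K_-$, $K_+=\sum_{p\ne0}\beta_p a_p^*a_{-p}^*$, $K_-=K_+^*$, one gets $[S_B,(\cN_++1)^l]=-2\sum_{i=0}^{l-1}(\cN_++1)^iK(\cN_++1)^{l-1-i}$. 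Since $K_\pm$ shift the particle number by $\mp2$, the identities $\cN_+K_\pm=K_\pm(\cN_+\mp2)$ let me move each $K_\pm$ to the far left and rewrite $(\cN_++1)^iK_\pm(\cN_++1)^{l-1-i}=K_\pm\,G^\pm_{i,l}(\cN_+)$, where $G^\pm_{i,l}$ is a polynomial in $\cN_+$ of degree $l-1$ with $|G^\pm_{i,l}|\le C(\cN_++1)^{l-1}$. The decisive input is then that $K_\pm(\cN_++1)^{-1}$ are bounded operators, i.e. $\|K_\pm\Psi\|\le C\|(\cN_++1)\Psi\|$, which reduces each term to $\le C\|(\cN_++1)^l\Xi\|$. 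For $K_-$ this bound is Cauchy--Schwarz together with $\sum_p\beta_p^2<\infty$ and $\sum_{p\ne0}\|a_pa_{-p}\Psi\|^2=\langle\Psi,\sum_{p}(a_p^*a_p)(a_{-p}^*a_{-p})\Psi\rangle\le\langle\Psi,\cN_+^2\Psi\rangle$, giving $\|K_-\Psi\|\le C\|\cN_+\Psi\|$; for $K_+$ one uses the CCR identity $K_-K_+=K_+K_-+[K_-,K_+]$ with $[K_-,K_+]=2\sum_{p\ne0}\beta_p^2\big(a_p^*a_p+a_{-p}^*a_{-p}+1\big)\le C(\cN_++1)$, so that $\|K_+\Psi\|^2=\langle\Psi,K_-K_+\Psi\rangle\le\|K_-\Psi\|^2+C\langle\Psi,(\cN_++1)\Psi\rangle\le C\|(\cN_++1)\Psi\|^2$.

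Given this, Step 2 is immediate: writing $(\cN_++1)^{2l}=B^2$ with $B=(\cN_++1)^l$ and using that $[S_B,B]$ is selfadjoint, one has $|\langle\Xi,[S_B,(\cN_++1)^{2l}]\Xi\rangle|=|2\,\Re\langle B\Xi,[S_B,B]\Xi\rangle|\le 2C_l\|B\Xi\|^2=2C_l\langle\Xi,(\cN_++1)^{2l}\Xi\rangle$; the Gr\"onwall inequality then gives $U_B^*(\cN_++1)^{2l}U_B\le e^{2C_l}(\cN_++1)^{2l}$, and since $-S_B$ generates $U_B^*$ and satisfies the same hypotheses, running the argument with $S_B$ replaced by $-S_B$ delivers $U_B(\cN_++1)^{2l}U_B^*\le e^{2C_l}(\cN_++1)^{2l}$. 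Combined with the square-root reduction of the first paragraph this proves the lemma. The differentiations and commutator manipulations above should first be carried out on a suitable core (for instance by inserting the cutoff $\1(\cN_+\le M)$ and letting $M\to\infty$) and the resulting operator inequality extended by density.

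The one genuinely delicate point is the commutator bound of Step 1. A naive Cauchy--Schwarz estimate of $(\cN_++1)^iK(\cN_++1)^{l-1-i}$ spreads the number-operator weights unevenly about $K$ and only controls $\langle\Xi,[S_B,(\cN_++1)^l]\Xi\rangle$ by $\langle\Xi,(\cN_++1)^m\Xi\rangle^{1/2}\langle\Xi,(\cN_++1)^{2l-m}\Xi\rangle^{1/2}$ with $m\ne l$, which by log-convexity of $t\mapsto\langle\Xi,(\cN_++1)^t\Xi\rangle$ is in general strictly \emph{larger} than $\langle\Xi,(\cN_++1)^l\Xi\rangle$; this is precisely why one must route the argument through the operator-norm boundedness of $K_\pm(\cN_++1)^{-1}$ and treat odd exponents by the square-root device rather than by a direct Gr\"onwall inequality.
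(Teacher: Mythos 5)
Your proof is correct, and the key estimates all check out: $\sum_p\beta_p\le C$ and $\sup_p\beta_p\le C$ from \eqref{eq:sigma-gamma-w}, the bounds $\|K_-\Psi\|\le C\|\cN_+\Psi\|$ and $\|K_+\Psi\|^2=\|K_-\Psi\|^2+\langle\Psi,[K_-,K_+]\Psi\rangle\le C\|(\cN_++1)\Psi\|^2$, hence $\|[S_B,(\cN_++1)^l]\Xi\|\le C_l\|(\cN_++1)^l\Xi\|$, the Gr\"onwall argument for even exponents, and the L\"owner--Heinz reduction for odd ones. (One harmless slip: $K_\pm$ raise/lower $\cN_+$ by $\pm2$, so $\cN_+K_\pm=K_\pm(\cN_+\pm2)$, not $\cN_+\mp2$; the resulting polynomials $G^\pm_{i,l}$ are still of degree $l-1$ and bounded by $C(\cN_++1)^{l-1}$, so nothing downstream changes.)

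The paper does not write this proof out; it only remarks that it is ``similar and simpler'' than its Gr\"onwall proof of Lemma \ref{lem:moment-US}, which handles every $k$ at once by differentiating $f(t)=\langle\Phi,e^{tS}(\cN_++1)^ke^{-tS}\Phi\rangle$ and estimating the commutator by Cauchy--Schwarz with symmetrically inserted weights. The same device works here and makes your final-paragraph claim of necessity incorrect: writing $[S_B,(\cN_++1)^k]=\sum_p\beta_p\bigl(a_p^*a_{-p}^*\,[(\cN_++1)^k-(\cN_++3)^k]-a_pa_{-p}\,[(\cN_++1)^k-(\cN_+-1)^k]\bigr)$, one pairs $(\cN_++1)^{(k-2)/2}a_{-p}a_p\Xi$ against $(\cN_++1)^{-(k-2)/2}$ times the finite difference (which is $O_k\bigl((\cN_++1)^{k-1}\bigr)$); commuting the weight through the pair operator via $g(\cN_+)a_pa_{-p}=a_pa_{-p}\,g(\cN_+-2)$ and using $\|a_pa_{-p}\Psi\|\le\|\cN_+\Psi\|$ together with $\sum_p\beta_p\le C$ gives $|\partial_tf(t)|\le C_kf(t)$ directly, for odd $k$ as well as even. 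So the ``uneven spread of weights'' is not an obstruction one must route around; it disappears once the weights are commuted to sit symmetrically about $a_p^*a_{-p}^*$, exactly as in the paper's treatment of the cubic case. Your detour through the operator bounds $K_\pm(\cN_++1)^{-1}$, even exponents, and the square-root trick is a valid, self-contained alternative (and the boundedness of $K_\pm(\cN_++1)^{-1}$ is a clean structural fact worth having), but it is a longer road than the one the paper intends, not a necessary one.
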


\begin{lemma} \label{lem:moment-US}
Let $U_S=e^S$ be given in \eqref{eq:US-intro}. Then  for all $t\in [-1,1]$ and $k\in \mathbb{N}$, 
\begin{equation}\label{eq:moment-US}
e^{tS} (\cN_+ +1)^k e^{-tS} \leq C_k (\cN_+ +1)^k.
\end{equation}
\end{lemma}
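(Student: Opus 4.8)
The plan is to control the growth of $(\cN_++1)^k$ under the flow $t\mapsto e^{tS}$ via a Gr\"onwall argument, which is the standard strategy for such propagation bounds. First I would fix a vector $\Phi$ in a suitable dense domain (say finite particle vectors) and set $f(t)=\langle \Phi, e^{tS}(\cN_++1)^k e^{-tS}\Phi\rangle$. Differentiating in $t$ gives
\begin{equation*}
f'(t)=\big\langle \Phi, e^{tS}\big[S,(\cN_++1)^k\big] e^{-tS}\Phi\big\rangle,
\end{equation*}
so the whole matter reduces to estimating the commutator $[S,(\cN_++1)^k]$ by $(\cN_++1)^k$ (up to a constant). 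The goal is then to establish an operator inequality of the form
\begin{equation*}
\pm\big[S,(\cN_++1)^k\big] \le C_k\,(\cN_++1)^k,
\end{equation*}
whence $|f'(t)|\le C_k f(t)$ and Gr\"onwall gives $f(t)\le e^{C_k|t|}f(0)$, which is \eqref{eq:moment-US} after recalling $t\in[-1,1]$.

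To estimate the commutator I would use that $S=\frac{1}{\sqrt N}\sum_{p,q}\eta_{p,q}(a^*_{p+q}a^*_{-p}a^*_{-q}\1^{\le N}-\mathrm{h.c.})$ changes the particle number by exactly $3$ on each monomial, so writing $T=\frac{1}{\sqrt N}\sum_{p,q}\eta_{p,q}\,a^*_{p+q}a^*_{-p}a^*_{-q}\1^{\le N}$ one has $S=T-T^*$ and $(\cN_++1)^k T = T\,(\cN_++4)^k$, hence
\begin{equation*}
\big[S,(\cN_++1)^k\big] = T\big((\cN_++4)^k-(\cN_++1)^k\big) - \big((\cN_++4)^k-(\cN_++1)^k\big)T^* .
\end{equation*}
Since $(\cN_++4)^k-(\cN_++1)^k \le C_k(\cN_++1)^{k-1}$, it suffices to bound $\pm\big(T\,(\cN_++1)^{k-1}+(\cN_++1)^{k-1}T^*\big)\le C_k(\cN_++1)^k$, i.e. a cubic-in-$a$ operator times $(\cN_++1)^{k-1}$ controlled by $(\cN_++1)^k$. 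This is done by the usual Cauchy--Schwarz splitting: for any vectors $\Phi,\Psi$,
\begin{equation*}
\Big|\big\langle \Phi, T(\cN_++1)^{k-1}\Psi\big\rangle\Big| \le \frac{1}{\sqrt N}\sum_{p,q}|\eta_{p,q}|\,\big\|a_{p+q}a_{-p}a_{-q}(\cN_++1)^{\frac{k-1}{2}}\Phi\big\|\,\big\|(\cN_++1)^{\frac{k-1}{2}}\Psi\big\|,
\end{equation*}
then applying Cauchy--Schwarz in $(p,q)$ and using $\sum_{p,q}|\eta_{p,q}|\le C$ from \eqref{eq:US-summability} together with $\sum_{p,q}\|a_{p+q}a_{-p}a_{-q}\xi\|^2 \le \langle\xi,\cN_+^3\xi\rangle$; on the subspace $\cN_+\le N$ the factor $\frac1N\,\cN_+^3$ is bounded by $\cN_+^2$, and combining everything produces the bound $C_k\big\langle\Phi,(\cN_++1)^k\Phi\big\rangle^{1/2}\big\langle\Psi,(\cN_++1)^k\Psi\big\rangle^{1/2}$, as needed. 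The cutoff $\1^{\le N}$ and the prefactor $N^{-1/2}$ are precisely what make this work uniformly in $N$.

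The main obstacle is the bookkeeping in the commutator expansion: one must be careful that $S$ (or rather each monomial in $S$) maps the cutoff subspace into itself and that the identity $(\cN_++1)^k T=T(\cN_++4)^k$ is applied on the correct domain, and one should check that $f(t)$ is genuinely differentiable --- this is fine because $S$ is bounded on each $\1^{\le M}\cF_+$ and the estimates are uniform, so a density/limiting argument closes it. A cleaner alternative that avoids differentiating altogether is to expand $e^{tS}(\cN_++1)^ke^{-tS}$ as a finite-order commutator series (Duhamel), bounding each nested commutator $[\underbrace{S,[S,\dots,[S}_{j},(\cN_++1)^k]]]$ by $C(\cN_++1)^k$ via the same Cauchy--Schwarz mechanism; since $S$ raises/lowers $\cN_+$ by $3$, after $j$ steps one still lands on $(\cN_++1)^k$ times a bounded coefficient, and the series is summable. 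Either route reduces Lemma \ref{lem:moment-US} to the single cubic estimate above, which is the heart of the matter.
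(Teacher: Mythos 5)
Your overall strategy is exactly the paper's: Gr\"onwall on $f(t)=\langle\Phi,e^{tS}(\cN_++1)^ke^{-tS}\Phi\rangle$, the commutation identity $(\cN_++1)^kT=T(\cN_++4)^k$ with $|(\cN_++4)^k-(\cN_++1)^k|\le C_k(\cN_++1)^{k-1}$, Cauchy--Schwarz with the summability \eqref{eq:US-summability}, and the cutoff $\1^{\le N}$ together with the $N^{-1/2}$ prefactor to absorb the surplus. However, the key displayed estimate does not deliver what you claim, because of how you distribute the powers of $(\cN_++1)$. With your symmetric split $(\cN_++1)^{(k-1)/2}$ on each side, the annihilation-triple factor carries, after $\sum_{p,q}\|a_{p+q}a_{-p}a_{-q}\xi\|^2\le\langle\xi,\cN_+^3\xi\rangle$ and the cutoff bound $\1\,\cN_+^3\le CN\cN_+^2$, a norm of size $\sqrt{N}\,\|(\cN_++1)^{(k+1)/2}\Phi\|$; after multiplying by $N^{-1/2}$ and the other factor you obtain
\begin{equation*}
C_k\,\|(\cN_++1)^{(k+1)/2}\Phi\|\;\|(\cN_++1)^{(k-1)/2}\Psi\|,
\end{equation*}
not $C_k\langle\Phi,(\cN_++1)^k\Phi\rangle^{1/2}\langle\Psi,(\cN_++1)^k\Psi\rangle^{1/2}$. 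With $\Phi=\Psi=e^{-tS}\Phi$ this is $\sqrt{f_{k+1}(t)f_{k-1}(t)}$, which is in general \emph{larger} than $f_k(t)$ (think of a state with a small component in a very high particle-number sector), so the inequality $|f'(t)|\le C_kf(t)$ does not follow and the Gr\"onwall argument does not close: $f_{k+1}$ cannot be controlled by $f_k$ a priori.

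The repair is small and is precisely what the paper does: split asymmetrically, putting $(\cN_++1)^{(k-3)/2}$ with the annihilation triple (so that side closes at exactly $\|(\cN_++1)^{k/2}\cdot\|$, since $\cN_+^3(\cN_++1)^{k-3}\le(\cN_++1)^k$) and keeping the cutoff with the factor $(\cN_++1)^{(k+1)/2}$ coming from $\Theta_k(\cN_+)=(\cN_++1)^k-(\cN_++4)^k$ on the other side; there the surplus half power is killed by $\1^{\le N}(\cN_++1)^{1/2}\le\sqrt{N+1}$, which is what actually cancels the $N^{-1/2}$ prefactor. So the cutoff must be used on the $\Theta_k$ side, not converted into $N\cN_+^2$ on the triple side. (Your alternative route via a nested-commutator series is also shakier than the Gr\"onwall one: the iterated commutators are not individually proportional to $(\cN_++1)^k$ and the series has no evident $N$-uniform summability, so I would not rely on it.) With the asymmetric split your argument coincides with the paper's proof.
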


The results in Lemma \ref{lem:moment-UB} and Lemma \ref{lem:moment-US} are well-known. For the completeness, let us quickly explain the proof of Lemma \ref{lem:moment-US}, following the strategy in \cite[Proposition 4.2]{BBCS-19} (the proof of Lemma \ref{lem:moment-UB} is similar and simpler).

\begin{proof}[Proof of Lemma \ref{lem:moment-US}] Take a normalized vector  $\Phi \in \cF_+$ and define 
$$f(t)=\langle \Phi, e^{tS}(\cN_+ +1)^k e^{-tS}\Phi \rangle, \quad \forall t\in [-1,1].$$
Then 
\begin{align*}
\partial_t f(t)&=\langle \Phi, e^{tS}[S, (\cN_+ +1)^k] e^{-tS}\Phi \rangle\\
&= \frac{2}{\sqrt{N}}\Re \Big\langle \Phi, e^{tS} \sum_{\substack{p,q\neq 0\\ p+q\neq 0}} \eta_{p,q} a^*_{p+q}a^*_{-p}a^*_{-q} \1^{\leq N}\Theta_k(\cN_+)e^{-tS}\Phi \Big\rangle
\end{align*}
with
$$
\Theta_k(\cN_+)= (\cN_+ +1)^k-(\cN_+ +4)^k.
$$
Here we have used 
$$
[a^*_{p+q}a^*_{-p}a^*_{-q}\1^{\leq N}, (\cN_+ +1)^k]=a^*_{p+q}a^*_{-p}a^*_{-q}\1^{\leq N} \Theta_k(\cN_+). 
$$
It is obvious that $|\Theta_k(\cN_+)| \le C_k (\cN_+ +1)^{k-1}$. Combining with the summability   \eqref{eq:US-summability} and the Cauchy-Schwarz inequality we obtain
\begin{align}
\Big| \partial_t f(t) \Big| &\le  \frac{2}{\sqrt{N}} \Big( \sum_{\substack{p,q\neq 0\\ p+q\neq 0}} \Big\|  (\cN_+ +1)^{(k-3)/2} a_{p+q}a_{-p}a_{-q} e^{-tS}\Phi \|^2 \Big)^{1/2} \times \nn\\
&\quad \times \Big( \sum_{\substack{p,q\neq 0\\ p+q\neq 0}} |\eta_{p,q}|^2  \Big\|  \1^{\le N} (\cN_+ +1)^{(3-k)/2} \Theta_k(\cN_+)e^{-tS}\Phi \Big\|^2 \Big)^{1/2}\nn\\
&\le \frac{C_k}{\sqrt{N}} \Big\| \cN_+^{k/2}  e^{-tS} \Phi \Big\| \Big\|  \1^{\le N}(\cN_+  +1) ^{(k+1)/2}  e^{-tS} \Phi \Big\|.  \label{eq:US-cN-pre}
\end{align}
Thanks to the cut-off, we can bound
$$
 \1^{\le N}(\cN_+ +1) ^{(k+1)/2}  \le \sqrt{N+1} \cN_+^{k/2}. 
$$
Thus \eqref{eq:US-cN-pre} implies that  
\begin{align*}
\Big| \partial_t f(t) \Big| \le C_k \Big\|  (\cN_+ +1 )^{(k+1)/2}  e^{-tS} \Phi \Big\|^2 = C_k f(t)
\end{align*}
From Gr\"onwall's lemma, it follows that
\bq \label{eq:ft-bound}
f(t) \le C_k f(0), \quad \forall t\in [-1,1].
\eq
Since the latter bound is uniform in $\Phi$, we get the desired operator inequality.   
\end{proof}

We will also need the following refinement of Lemma \ref{lem:moment-US}.

\begin{lemma} \label{lem:moment-US-2}
Let $U_S=e^S$ be given in \eqref{eq:US-intro}. Then  for all $t\in [-1,1]$ and $k\in \mathbb{N}$, 
\begin{equation}\label{eq:moment-US-2}
e^{tS} \cN_+^k  e^{-tS} \leq C_k \Big(  \cN_+^k + \frac{(\cN_+ +1)^{k+1}}{N} \Big) . 
\end{equation}
\end{lemma}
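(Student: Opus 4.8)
The plan is to repeat the Grönwall argument from the proof of Lemma~\ref{lem:moment-US}, but to keep track of the prefactors more carefully so as to obtain a gain of $N^{-1}$ in front of the $(\cN_++1)^{k+1}$ term rather than merely $N^{-1/2}$.

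Fix a normalized $\Phi\in\cF_+$ and set $g(t)=\langle\Phi,e^{tS}\cN_+^k e^{-tS}\Phi\rangle$ for $t\in[-1,1]$. Writing $S=\frac1{\sqrt N}\sum_{p,q}\eta_{p,q}(B_{p,q}-B_{p,q}^*)$ with $B_{p,q}=a^*_{p+q}a^*_{-p}a^*_{-q}\1^{\le N}$ and $\eta_{p,q}\in\R$, and using
$$[B_{p,q},\cN_+^k]=B_{p,q}\,\widetilde\Theta_k(\cN_+),\qquad \widetilde\Theta_k(n):=n^k-(n+3)^k,$$
one differentiates $g$ exactly as for $f$ in the proof of Lemma~\ref{lem:moment-US} to get
$$\partial_t g(t)=\frac{2}{\sqrt N}\Re\sum_{p,q}\eta_{p,q}\big\langle\Phi,e^{tS}a^*_{p+q}a^*_{-p}a^*_{-q}\1^{\le N}\widetilde\Theta_k(\cN_+)e^{-tS}\Phi\big\rangle .$$
Since $|\widetilde\Theta_k(\cN_+)|\le C_k(\cN_++1)^{k-1}$ — just like $|\Theta_k(\cN_+)|$ there — the very same Cauchy--Schwarz estimate (distributing a weight $(\cN_++1)^{\pm(k-3)/2}$, and using $\sum_{p,q}|\eta_{p,q}|^2\le C$ together with $\sum_{p,q}\|a_{p+q}a_{-p}a_{-q}\psi\|^2\le\langle\psi,\cN_+^3\psi\rangle$) that produced \eqref{eq:US-cN-pre} now yields
$$|\partial_t g(t)|\le \frac{C_k}{\sqrt N}\,\big\|\cN_+^{k/2}e^{-tS}\Phi\big\|\,\big\|\1^{\le N}(\cN_++1)^{(k+1)/2}e^{-tS}\Phi\big\| .$$

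The new point is that, instead of absorbing $\1^{\le N}(\cN_++1)^{(k+1)/2}$ into $\sqrt{N+1}\,\cN_+^{k/2}$ via the cut-off (which is what gives only $O(1)$ in Lemma~\ref{lem:moment-US}), we estimate the second factor by Lemma~\ref{lem:moment-US} applied with exponent $k+1$:
$$\big\|\1^{\le N}(\cN_++1)^{(k+1)/2}e^{-tS}\Phi\big\|^2\le\big\langle\Phi,e^{tS}(\cN_++1)^{k+1}e^{-tS}\Phi\big\rangle\le C_{k+1}\big\langle\Phi,(\cN_++1)^{k+1}\Phi\big\rangle=:C_{k+1}M .$$
Since the first factor squared is exactly $g(t)$, this gives the scalar differential inequality $|\partial_t g(t)|\le C_k N^{-1/2}M^{1/2}\,g(t)^{1/2}$ on $[-1,1]$. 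Comparing with $\sqrt{g(t)+\delta}$ and letting $\delta\downarrow0$, we obtain $g(t)^{1/2}\le g(0)^{1/2}+C_k N^{-1/2}M^{1/2}$, hence $g(t)\le 2g(0)+C_k^2 M/N$ for $|t|\le1$. Recalling $g(0)=\langle\Phi,\cN_+^k\Phi\rangle$ and that $\Phi$ was an arbitrary unit vector, this is precisely the operator inequality \eqref{eq:moment-US-2}.

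The only place where anything beyond Lemma~\ref{lem:moment-US} enters is this last step, and there is no real obstacle — just bookkeeping. The $N^{-1}$ gain comes from \emph{not} splitting the product $\|\cN_+^{k/2}e^{-tS}\Phi\|\,\|\1^{\le N}(\cN_++1)^{(k+1)/2}e^{-tS}\Phi\|$ by Young's inequality (which would reproduce $N^{-1/2}$), but rather from observing that the second factor is $O(1)$ by Lemma~\ref{lem:moment-US}, so that the differential inequality for $g$ is linear in $g^{1/2}$ with coefficient of size $O(N^{-1/2})$; integrating over the bounded interval $t\in[-1,1]$ turns this into an inhomogeneity of size $O(N^{-1})$.
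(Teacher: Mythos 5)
Your proposal is correct and follows essentially the same route as the paper: differentiate $g(t)=\langle\Phi,e^{tS}\cN_+^k e^{-tS}\Phi\rangle$, apply the same Cauchy--Schwarz estimate as for \eqref{eq:US-cN-pre}, and control the second factor $\|\1^{\le N}(\cN_++1)^{(k+1)/2}e^{-tS}\Phi\|$ by Lemma \ref{lem:moment-US} (i.e.\ \eqref{eq:ft-bound}) with exponent $k+1$, which is exactly the paper's mechanism for the $N^{-1}$ gain. The only (harmless) difference is the last elementary step: you integrate the inequality $|\partial_t g|\le C_k N^{-1/2}M^{1/2}g^{1/2}$ directly via $\sqrt{g+\delta}$, whereas the paper converts it by the Cauchy--Schwarz (AM--GM) inequality into $|\partial_t g|\le C_k\big(g+M/N\big)$ and applies Gr\"onwall's lemma; both yield \eqref{eq:moment-US-2}.
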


\begin{proof} Take a normalized vector  $\Phi \in \cF_+$ and define 
$$g(t)=\langle \Phi, e^{tS}\cN_+^k e^{-tS}\Phi \rangle, \quad \forall t\in [-1,1].$$
Then proceeding similarly to \eqref{eq:US-cN-pre}, we have 
$$
\Big| \partial_t g(t) \Big| \le \frac{C_k}{\sqrt{N}} \Big\| \cN_+^{k/2}  e^{-tS} \Phi \Big\| \Big\|  \1^{\le N}\cN_+^{(k+1)/2}  e^{-tS} \Phi \Big\| \le   \frac{C_k}{\sqrt{N}} \sqrt{g(t) f(t)} 
$$
with $f(t)$ being defined in the proof of Lemma \ref{lem:moment-US}.  Using \eqref{eq:ft-bound} and the Cauchy-Schwarz inequality we obtain 
$$
\Big| \partial_t g(t) \Big| \le C_k \Big( g(t) + \frac{\langle \Phi, ( \cN_+ +1 )^{k+1} \Phi \rangle}{N} \Big),  \quad \forall t\in [-1,1].
$$
From Gr\"onwall's lemma, it follows that
$$
g(t) \le  C_k \Big( g(0) +  \frac{1}{N}\langle \Phi, ( \cN_+ +1 )^{k+1} \Phi \rangle  \Big), \quad \forall t\in [-1,1].
$$
The latter bound is uniform in $\Phi$ and it implies the desired conclusion.
\end{proof}

\section{Excitation Hamiltonian}\label{sec:UN}

In this section, we study the action of the transformation $U_N$ in \eqref{eq:UN}. By conjugating $H_N$ with $U_N$, we can factor out the contribution of the condensation. More precisely, we have 
 
\begin{lemma}\label{lem:Bog-app} We have the operator identity on $\cF_{+}^{\le N}$
\begin{align*}
U_N H_{N} U_N^*&=  \frac{N}{2} \widehat{w}(0)+ \1^{\le N} (\cG_N + R_1) \1^{\le N}
\end{align*}
where
\begin{align*}
\cG_N &= \bH_{\rm Bog}  + \frac{\cN_+(1-\cN_+)}{2(N-1)} \widehat{w}(0)+ \sum_{p\neq 0}   \frac{1-\cN_+}{N-1} \widehat{w}(p) a^*_p a_p + \left(\frac12 \sum_{p\neq 0} \widehat{w}(p) a^*_p a^*_{-p} \frac{1-2\cN_+}{2N}+\,\, \text{h.c.}\right)\\
&\quad + \left(\frac{1}{\sqrt{N}} \sum_{\ell,p\neq 0, \ell+p\neq 0} \widehat{w}(\ell) a^*_{p+\ell} a^*_{-\ell} a_{p} +\,\, \text{h.c.} \right)   + \frac{1}{2(N-1)}\sum_{\substack{k,p \neq 0\\ \ell\neq -p,k}}\widehat{w}(\ell) a^*_{p+\ell} a^*_{k-\ell} a_p a_k 
\end{align*}
and the error term $R_1$ satisfies the quadratic form estimate  
\begin{align*}
&\pm R_1 \le \frac{C(\cN_++1)^{3}}{N^{3/2}}.
\end{align*}
Moreover, we have the operator inequality on $\cF_{+}$
\begin{align*}
\1^{\le N} U_N H_{N} U_N^* \1^{\le N} \le  \frac{N}{2} \widehat{w}(0)+ \cG_N + \frac{C(\cN_++1)^{3}}{N^{3/2}}.
\end{align*}
\end{lemma}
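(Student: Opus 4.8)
The strategy is to conjugate the grand-canonical Hamiltonian \eqref{eq:grand-Hamiltonian} by $U_N$ and keep track of all terms down to order $N^{-1}$. On $\cF_+^{\le N}=U_N\gH^N$ this conjugation is well defined, and the kinetic term $\sum_pp^2a_p^*a_p=\sum_{p\ne0}p^2a_p^*a_p$ is left invariant by \eqref{eq:action}. I would split the interaction $\frac1{2(N-1)}\sum_{p,q,k}\widehat w(k)a^*_{p-k}a^*_{q+k}a_pa_q$ according to the number $j$ of the four momenta $\{p-k,q+k,p,q\}$ that vanish. The conservation law $(p-k)+(q+k)=p+q$ forces that if three of them vanish then so does the fourth, and all four vanish only when $k=p=q=0$; hence only $j\in\{0,1,2,4\}$ occur, giving four pieces $H_0,H_1,H_2,H_4$.

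Then I would conjugate each piece by the substitution rules of \cite[Proposition 4.2]{LewNamSerSol-15}, which on $\cF_+^{\le N}$ replace a matched pair $a_0^*a_0$ by $N-\cN_+$, the product $a_0^*a_0^*a_0a_0=(a_0^*a_0)^2-a_0^*a_0$ by $(N-\cN_+)(N-\cN_+-1)$, an unmatched pair of two zero-creations or two zero-annihilations (present only in $H_2$) by $\sqrt{(N-\cN_+)(N-\cN_+-1)}$, and the lone zero-operator of $H_1$ by $\sqrt{N-\cN_+}$ --- each placed in the correct operator order, and with the usual care near $\cN_+\approx N$. I would Taylor-expand the square roots, $\sqrt{N-\cN_+}=\sqrt N-\tfrac{\cN_+}{2\sqrt N}+r_1(\cN_+)$ and $\sqrt{(N-\cN_+)(N-\cN_+-1)}=N-\cN_+-\tfrac12+r_2(\cN_+)$ with $|r_1|\le C(\cN_++1)^2N^{-3/2}$ and $|r_2|\le C(\cN_++1)^2N^{-1}$ as operators, keep all terms down to order $N^{-1}$, and collect the rest into $R_1$. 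One then identifies: $H_0$ reproduces the quartic term of $\cG_N$ verbatim; $H_1$, whose four sub-cases group into two mutually adjoint sums, produces $\tfrac1{\sqrt N}\sum_{\ell,p\ne0,\,\ell+p\ne0}\widehat w(\ell)a^*_{p+\ell}a^*_{-\ell}a_p+\text{h.c.}$ up to a remainder; the number-type part of $H_2$ with coefficient $\widehat w(p)$ gives $\tfrac{N-\cN_+}{N-1}\sum_{p\ne0}\widehat w(p)a_p^*a_p=\sum_{p\ne0}\widehat w(p)a_p^*a_p+\sum_{p\ne0}\tfrac{1-\cN_+}{N-1}\widehat w(p)a_p^*a_p$, the first summand joining the kinetic term into the diagonal of $\bH_{\rm Bog}$ and the second being the corresponding term of $\cG_N$ (this identity is exact, no remainder); the two-zero-creation/annihilation part of $H_2$ gives $\tfrac12\sum_{p\ne0}\widehat w(p)(a_p^*a_{-p}^*+a_pa_{-p})$ (the off-diagonal part of $\bH_{\rm Bog}$) together with $\big(\tfrac12\sum_{p\ne0}\widehat w(p)a_p^*a_{-p}^*\tfrac{1-2\cN_+}{2N}+\text{h.c.}\big)$, up to a remainder; and, crucially, $H_4$ contributes the exact polynomial $\tfrac N2\widehat w(0)-\tfrac{2N-1}{2(N-1)}\widehat w(0)\cN_++\tfrac{1}{2(N-1)}\widehat w(0)\cN_+^2$ while the $\widehat w(0)$-part of $H_2$ contributes the exact polynomial $\tfrac1{N-1}\widehat w(0)(N-\cN_+)\cN_+$, and adding the two the $O(\widehat w(0)\cN_+)$ pieces cancel and one is left with precisely the constant $\tfrac N2\widehat w(0)$ plus the term $\tfrac{\cN_+(1-\cN_+)}{2(N-1)}\widehat w(0)$ of $\cG_N$ --- with no remainder.

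It remains to bound $R_1$, which by the above receives contributions only from the square-root corrections in $H_1$ and $H_2$. The dominant one comes from $H_1$: after the grouping, the $H_1$-contribution equals $\sum_{\ell,p\ne0,\,\ell+p\ne0}\widehat w(\ell)a^*_{p+\ell}a^*_{-\ell}a_p\,\widetilde r(\cN_+)+\text{h.c.}$ with $\widetilde r(\cN_+)=\tfrac1{N-1}\sqrt{N-\cN_+}-\tfrac1{\sqrt N}$ and $|\widetilde r(\cN_+)|\le C(\cN_++1)N^{-3/2}$; its expectation against a $\Phi\in\cF_+$ is bounded, by Cauchy--Schwarz and the canonical commutation relations \eqref{eq:CCR} together with $\sum_\ell|\widehat w(\ell)|<\infty$ and $\sum_\ell|\widehat w(\ell)|^2=\|w\|_{L^2}^2<\infty$, by $CN^{-3/2}\langle\Phi,(\cN_++1)^2\Phi\rangle^{1/2}\langle\Phi,(\cN_++1)^3\Phi\rangle^{1/2}\le CN^{-3/2}\langle\Phi,(\cN_++1)^3\Phi\rangle$. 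The $r_2$-type corrections in the two-zero-creation/annihilation part of $H_2$ carry an additional power of $N^{-1}$ and are of order $N^{-2}$. This gives $\pm R_1\le C(\cN_++1)^3N^{-3/2}$, hence the claimed operator identity. For the operator inequality on all of $\cF_+$, for $\Phi\in\cF_+$ one writes $\langle\Phi,\1^{\le N}U_NH_NU_N^*\1^{\le N}\Phi\rangle=\tfrac N2\widehat w(0)\|\1^{\le N}\Phi\|^2+\langle\1^{\le N}\Phi,(\cG_N+R_1)\1^{\le N}\Phi\rangle$, bounds the first term by $\tfrac N2\widehat w(0)\|\Phi\|^2$ using $\widehat w(0)\ge0$, and the $R_1$-term by $CN^{-3/2}\langle\Phi,(\cN_++1)^3\Phi\rangle$ using that $(\cN_++1)^3$ commutes with $\1^{\le N}$; for $\cG_N$ one splits off the number-conserving part $D$, which commutes with $\1^{\le N}$, so that $\langle\1^{\le N}\Phi,D\1^{\le N}\Phi\rangle=\langle\Phi,D\Phi\rangle-\langle(1-\1^{\le N})\Phi,D(1-\1^{\le N})\Phi\rangle$ with $D\ge-C(\cN_++1)^2N^{-1}\ge-C(\cN_++1)^3N^{-3/2}$ on $\{\cN_+>N\}$, while the off-diagonal part of $\cG_N$ produces a localization error supported near $\cN_+\approx N$, where the relevant operators have norm $O(N)\le CN^{3/2}$.

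The main obstacle is the bookkeeping in the middle step: each of the numerous sub-cases must be carried through the $U_N$-conjugation with the correctly ordered function of $\cN_+$, Taylor-expanded, and recombined, and one must verify that the surviving $O(1)$ and $O(N^{-1})$ terms assemble into exactly $\bH_{\rm Bog}$ and the five corrections appearing in $\cG_N$ --- the key point being the exact cancellation of the two $O(\widehat w(0)\cN_+)$ contributions (from $H_4$ and from the $\widehat w(0)$-part of $H_2$). Once this is done, the remainder bound and the localization argument are routine applications of Cauchy--Schwarz and \eqref{eq:CCR}.
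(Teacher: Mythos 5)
Your proposal follows essentially the same route as the paper: conjugate \eqref{eq:grand-Hamiltonian} by $U_N$ using the action formulas of \cite[Proposition 4.2]{LewNamSerSol-15}, Taylor-expand $\sqrt{N-\cN_+}$ and $\sqrt{(N-\cN_+)(N-\cN_+-1)}$ keeping terms to order $N^{-1}$, bound the square-root remainders by Cauchy--Schwarz to get $\pm R_1\le C(\cN_++1)^3N^{-3/2}$, and then remove the cut-offs for the inequality on $\cF_+$ by treating the number-conserving and off-diagonal parts of $\cG_N$ separately; your bookkeeping (including the cancellation of the $O(\widehat w(0)\cN_+)$ pieces from $H_4$ and the $\widehat w(0)$-part of $H_2$) reproduces exactly the paper's intermediate identity and remainder bounds. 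The one loose point is the last localization step: on $\cF_+$ the sector $\{\cN_+>N-2\}$ is infinite-dimensional and the off-diagonal part of $\cG_N$ is \emph{not} of norm $O(N)$ there; the correct (and easy) repair, which is what the paper does, is to use the quadratic-form bound $\pm X\le C\big((\cN_++1)+(\cN_++1)^2/N+(\cN_++1)^3/N^2\big)$ together with the fact that $X$ changes $\cN_+$ by at most $2$, so the localization error is supported on $\{\cN_+>N-2\}$ and hence dominated by $C(\cN_++1)^3N^{-2}$.
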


\begin{proof}
A straightforward computation using the relations \eqref{eq:UN} shows that
\begin{align*}
U_N H_{N} U_N^*&= \frac{N}{2} \widehat{w}(0)+\frac{\cN_+(1-\cN_+)}{2(N-1)} \widehat{w}(0)+ \sum_{p\neq 0} \Big( p^2 +  \frac{N-\cN_+}{N-1} \widehat{w}(p) \Big) a^*_p a_p \\
&\quad+\frac12 \left(\sum_{p\neq 0} \widehat{w}(p) a^*_p a^*_{-p} \frac{\sqrt{(N-\cN_+)(N-\cN_+ -1)}}{N-1}+\,\, \text{h.c.}\right)\\
&\quad + \left( \sum_{\ell,p\neq 0, \ell+p\neq 0} \widehat{w}(\ell) a^*_{p+\ell} a^*_{-\ell} a_{p} \frac{\sqrt{N-\cN_+}}{N-1}+\,\, \text{h.c.} \right) \\
&\quad + \frac{1}{2(N-1)}\sum_{\substack{k,p \neq 0\\ \ell\neq -p,k}}\widehat{w}(\ell) a^*_{p+\ell} a^*_{k-\ell} a_p a_k.
\end{align*}
This operator identity holds on $\cF_+^{\le N}$. For further analysis, we will expand $\sqrt{N-\cN_+}$ and $\sqrt{(N-\cN_+)(N-\cN_+ -1)}$, making the effective expressions well-defined on the whole Fock space $\cF_+$. This idea has been used before in \cite{BNNS-19}. Here it suffices to use 
\bq \label{eq:Expand-1}
\Big| \frac{\sqrt{N-\cN_+}}{N-1} - \frac{1}{N^{1/2}}\Big| \le \frac{C(\cN_++1)}{N^{3/2}}
\eq
and 
\bq \label{eq:Expand-2}
\Big| \frac{\sqrt{(N-\cN_+)(N-\cN_+-1)}}{N-1} - 1 - \frac{1-2\cN_+}{2N} \Big| \le \frac{C(\cN_++1)^2}{N^2}.
\eq
The operator inequalities \eqref{eq:Expand-1} and \eqref{eq:Expand-2} hold on $\cF_+^{\le N}$. Thus we can write 
$$U_N H_{N} U_N^*=  \frac{N}{2} \widehat{w}(0)+ \bH_{\rm Bog} + \cG_N + R_1$$
with $ \cG_N$ given in the statement of Lemma \ref{lem:Bog-app} and with the error term $R_1=R_{1a}+R_{1b}$ where
\begin{align*}
R_{1a} &= \frac12 \sum_{p\neq 0} \widehat{w}(p) a^*_p a^*_{-p} \Big( \frac{\sqrt{(N-\cN_+)(N-\cN_+-1)}}{N-1} - 1 - \frac{1-2\cN_+}{2N}  \Big)\,\,+  \text{h.c.} ,\\
R_{1b} &= \sum_{\ell,p\neq 0, \ell+p\neq 0} \widehat{w}(\ell) a^*_{p+\ell} a^*_{-\ell} a_{p} \Big( \frac{\sqrt{N-\cN_+}}{N-1} - \frac{1}{\sqrt{N}} \Big) + \,\, \text{h.c.} .
\end{align*}
By the Cauchy-Schwarz inequality, we have the quadratic form estimates
\begin{align*}
\pm R_{1a}&\le N^{-2}  \sum_{p\ne 0}  a^*_p a^*_{-p} (\cN_+ +1) a_{-p}a_p + N^{2} \sum_{p\ne 0} |\widehat{w}(p)|^2 (\cN_+ +1)^{-1/2}\times \\
&\quad \times \Big( \frac{\sqrt{(N-\cN_+)(N-\cN_+-1)}}{N-1} - 1 - \frac{1-2\cN_+}{2N}  \Big)^2 (\cN_+ +1)^{-1/2}\\
&\le \frac{C(\cN_++1)^3}{N^2}
\end{align*}
and 
\begin{align*}
\pm R_{1b}&\le N^{-3/2}  \sum_{\ell,p\neq 0, \ell+p\neq 0} a^*_{p+\ell} a^*_{-\ell} a_{-\ell} a_{p+\ell} \\
&\quad + N^{3/2} \sum_{\ell,p\neq 0, \ell+p\neq 0}   |\widehat{w}(\ell)|^2 \Big( \frac{\sqrt{N-\cN_+}}{N-1} - \frac{1}{\sqrt{N}} \Big) a_p^* a_p  \Big( \frac{\sqrt{N-\cN_+}}{N-1} - \frac{1}{\sqrt{N}} \Big) \\
&\le \frac{C(\cN_++1)^3}{N^{3/2}}.
\end{align*}
This completes the first part of Lemma \ref{lem:Bog-app}.  

Now let us turn to the operator inequality on the Fock space $\cF_+$. We have proved that 
\bq \label{eq:UN-00-00}
\1^{\le N} U_N H_N U_N^* \1^{\le N} \le \1^{\le N} \Big( \frac{N}{2}\widehat w(0) + \cG_N + \frac{C(\cN_++1)^3}{N^{3/2}} \Big) \1^{\le N}. 
\eq
Let us compare the right side of \eqref{eq:UN-00-00} with the corresponding version without the cut-off $\1^{\le N}$. First, consider  the terms commuting with $\cN_+$. Since 
$$
\frac{N}{2}\widehat w(0) + \sum_{p\ne 0} |p|^2 a_p^* a_p + \frac{1}{2(N-1)}\sum_{\substack{k,p \neq 0\\ \ell\neq -p,k}}\widehat{w}(\ell) a^*_{p+\ell} a^*_{k-\ell} a_p a_k  +  \frac{C(\cN_++1)^3}{N^{3/2}} \ge 0,
$$
this operator is not smaller than its product with the cut-off $\1^{\le N}$. Moreover, using 
$$\1^{>N}= \1 - \1^{\le N} =\1(\cN_+ >N) \le \frac{\cN_+}{N}$$
we have
\begin{align*}
\pm \1^{>N}  \Big( \frac{\cN_+(1-\cN_+)}{2(N-1)} \widehat{w}(0) + \sum_{p\neq 0}   \frac{1-\cN_+}{N-1} \widehat{w}(p) a^*_p a_p \Big) \le  \1^{>N}   \frac{C(\cN_+ +1 )^2}{N}  \le  \frac{C (\cN_+ +1 )^3}{N^2}. 
\end{align*}
Finally, consider 
$$
X:= \Big( \frac{1}{2} \sum_{p\ne 0} \widehat w(p) a_p^* a_{-p}^* \Big(1 +  \frac{1-\cN_+}{N-1} \Big)  + h.c.\Big) + \Big(\frac{1}{\sqrt{N}} \sum_{\ell,p\neq 0, \ell+p\neq 0} \widehat{w}(\ell) a^*_{p+\ell} a^*_{-\ell} a_{p} +\,\, \text{h.c.} \Big). 
$$
By the Cauchy-Schwarz inequality $\pm (Y^*Z+ Z^* Y)  \le Y^* Y + Z^* Z$ we can bound
\begin{align*}
\pm X &\le   \sum_{p\ne 0} a_p^* a_{-p}^* (\cN_+ +1)^{-1} a_{-p} a_p  +\sum_{p\ne 0} |\widehat w(p)|^2 (\cN_+ +1) \Big(1 +  \frac{1-\cN_+}{N-1} \Big)^2 \\
&\quad+ \frac{1}{N} \sum_{\ell,p\neq 0, \ell+p\neq 0} a^*_{p+\ell} a^*_{-\ell} a_{-\ell} a_{p+\ell}  +  \sum_{\ell,p\neq 0, \ell+p\neq 0} |\widehat{w}(\ell)|^2  a^*_{p} a_p \\
&\le C \Big[ (\cN_+ + 1) + \frac{(\cN_+ + 1)^2}{N} + \frac{(\cN_+ + 1)^3}{N^2} \Big].  
\end{align*}
Moreover, since $X$ changes the number of particles by at most 2, we have
\begin{align*}
X + \1^{>N} X \1^{>N} - \1^{\le N} X \1^{\le N} =   \1^{>N} X + X \1^{>N} = 1^{>N} X \1^{ > N-2}+ \1^{ > N-2} X \1^{>N}.
\end{align*}
Hence, combining with the above bound on $\pm X$ we find that 
\begin{align*}
\pm (X -   \1^{\le N} X \1^{\le N} ) &= \pm \Big( 1^{>N} X \1^{ > N-2}+ \1^{ > N-2} X \1^{>N}  -  \1^{>N} X \1^{>N}   \Big)\\
&\le C \Big[ (\cN_+ + 1) + \frac{(\cN_+ + 1)^2}{N} + \frac{(\cN_+ + 1)^3}{N^2} \Big] \1^{ > N-2} \\
& \le \frac{C (\cN_+ +1)^3}{N^2}. 
\end{align*}
 This completes the proof of the operator inequality on $\cF_+$ in Lemma \ref{lem:Bog-app}. 
\end{proof}

\section{Quadratic transformation} \label{sec:UB}

Recall that the Bogoliubov transformation $U_B$ in \eqref{eq:UB-intro} diagonalizes $\bH_{\rm Bog}$ as in \eqref{eq:Bog-UBHUB}. In this section, we will study the action of $U_N$ on the operator $\cG_N$. We have 

\begin{lemma}\label{lem:quadratic-transformation} Let $\cG_N$ be given in Lemma \ref{lem:Bog-app}. Then we have the operator identity on $\cF_{+}$ 
$$
U_B \cG_N U_B^* = \langle0| U_B \cG_N U_B^* |0\rangle + \sum_{p\ne 0} e(p) a_p^* a_p +  \cC_N  + R_2
$$
where
$$
\cC_N= \frac{1}{\sqrt{N}}\sum_{\substack{p,q\neq 0\\ p+q\neq 0}}\widehat{w}(p) \Big[ (\sigma_{p+q}\sigma_{-p}\gamma_q +\gamma_{p+q}\gamma_{p}\sigma_q ) a^*_{p+q} a^*_{-p} a^*_{-q} +\rm{h.c.} \Big]
$$
and the error term $R_2$ satisfies  
\begin{align*}
&\pm R_2 \le \frac{C}{\sqrt{N}} \cN_+^2 + \frac{C (\cN_+ +1)^3}{N^{3/2}}.   
\end{align*}
\end{lemma}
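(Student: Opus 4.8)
The strategy is to conjugate each term of $\cG_N$ (as listed in Lemma \ref{lem:Bog-app}) by $U_B$ using the Bogoliubov relations \eqref{eq:Bog-trans}, namely $U_B a_p U_B^* = \sigma_p a_p + \gamma_p a_{-p}^*$, and to sort the resulting monomials by their order in $N^{-1/2}$ and by the number of creation/annihilation operators they carry. The quadratic piece $\bH_{\rm Bog}$ is handled exactly by \eqref{eq:Bog-UBHUB}, producing $E_{\rm Bog} + \sum_{p\ne 0} e(p) a_p^* a_p$. The genuinely new contribution of order $N^{-1/2}$ comes from the cubic term $\frac{1}{\sqrt N}\sum \widehat w(\ell) a^*_{p+\ell}a^*_{-\ell}a_p + \mathrm{h.c.}$ in $\cG_N$: substituting the Bogoliubov relations into each of the three operators produces $2^3=8$ monomials, of which the two "all-creation'' and "all-annihilation'' ones are exactly $\cC_N$ (after relabelling momenta and using that $\sigma,\gamma$ are even in $p$), while the remaining six monomials have a mixed structure $a^*a^*a$, $a^*aa$, plus number-operator-type pieces arising from normal ordering. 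Those six, together with everything coming from the remaining terms of $\cG_N$ (the $\widehat w(0)$ number term, the $\frac{1-\cN_+}{N-1}\widehat w(p) a_p^* a_p$ term, the pairing term with the $\frac{1-2\cN_+}{2N}$ factor, and the quartic term), all get absorbed into $R_2$.

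\textbf{Estimating $R_2$.} The remaining six cubic monomials from conjugating $\frac{1}{\sqrt N}\sum \widehat w(\ell) a^*_{p+\ell}a^*_{-\ell}a_p$ come with a prefactor $N^{-1/2}$ and, being cubic in creation/annihilation operators, are bounded in quadratic form by $C N^{-1/2}(\cN_++1)^2$ via the Cauchy--Schwarz inequality $\pm(Y^*Z+Z^*Y)\le Y^*Y+Z^*Z$ — e.g. for a term of the type $\sum \widehat w(\ell)(\text{coeffs}) a^*_{p+\ell}a^*_{-\ell}a^*_{p}$ one splits off one factor $(\cN_++1)^{\pm 1/2}$, uses $\sum_\ell |\widehat w(\ell)|^2<\infty$ and the $\ell^1$-summability \eqref{eq:sigma-gamma-w} of $\sigma,\gamma$, and bounds the creation/annihilation block by $(\cN_++1)^{3/2}$. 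The quartic term of $\cG_N$ conjugated by $U_B$ gives, after expanding all $2^4$ monomials, terms with prefactor $N^{-1}$ which are quartic in operators, hence bounded by $C N^{-1}(\cN_++1)^2 \le C N^{-1}(\cN_++1)^3$; here the extra power of $N^{-1}$ (rather than $N^{-1/2}$) is what keeps them inside the error. The terms $\frac{\cN_+(1-\cN_+)}{2(N-1)}\widehat w(0)$ and $\sum_p \frac{1-\cN_+}{N-1}\widehat w(p)a_p^* a_p$ commute with $\cN_+$ up to the Bogoliubov conjugation and, using Lemma \ref{lem:moment-UB} to control $U_B(\cN_++1)^k U_B^*\le C_k(\cN_++1)^k$, are bounded by $C N^{-1}(\cN_++1)^2\le C N^{-3/2}(\cN_++1)^3$. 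Finally the pairing term $\frac12\sum_p\widehat w(p) a_p^* a_{-p}^*\frac{1-2\cN_+}{2N}+\mathrm{h.c.}$, once conjugated, produces a quadratic-in-operators expression with prefactor $N^{-1}$, again of size $C N^{-1}(\cN_++1)^2$. Collecting all contributions gives $\pm R_2 \le C N^{-1/2}\cN_+^2 + C N^{-3/2}(\cN_++1)^3$; note that the intermediate $N^{-1}(\cN_++1)^3$ bounds are dominated by the stated right-hand side for $\cN_+\le N^{1/2}$ and otherwise absorbed into $N^{-3/2}(\cN_++1)^3$, but to be safe one simply keeps both terms as written in the statement.

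\textbf{Identifying the constant term.} One must also check that the constant (vacuum-expectation) term is correctly $\langle 0| U_B\cG_N U_B^*|0\rangle$; this is automatic by construction, since after conjugation we write $U_B\cG_N U_B^* = c_N + (\text{quadratic number term}) + \cC_N + R_2$ where $c_N$ collects all scalar contributions (from normal-ordering the cubic and quartic pieces and from $E_{\rm Bog}$), and then observe that taking $\langle 0|\cdot|0\rangle$ of both sides kills the number term, $\cC_N$, and — crucially — $R_2$, because $R_2$ is built entirely from monomials that either annihilate the vacuum or, if they contain a number-operator factor, are multiplied by operators that do; so $\langle 0|R_2|0\rangle = 0$ and $c_N = \langle 0|U_B\cG_N U_B^*|0\rangle$ exactly.

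\textbf{Main obstacle.} The bookkeeping is the main difficulty: systematically expanding the conjugation of the cubic and quartic terms of $\cG_N$ into all their monomials, relabelling momenta so that the "resonant'' all-creation/all-annihilation part assembles precisely into $\cC_N$ with the stated coefficient $\sigma_{p+q}\sigma_{-p}\gamma_q+\gamma_{p+q}\gamma_p\sigma_q$, and verifying that every leftover monomial is genuinely of the claimed lower order. The analytic estimates themselves are routine Cauchy--Schwarz arguments powered by the two summability facts ($\widehat w\in\ell^1$, hence $\ell^2$, and $\sum_p\gamma_p\le C$, $\sup_p\sigma_p\le C$) together with Lemma \ref{lem:moment-UB}; the risk is purely combinatorial, in correctly matching coefficients and in making sure no monomial of order $N^{-1/2}$ with three creation or three annihilation operators has been overlooked (which would spoil the extraction of $\cC_N$).
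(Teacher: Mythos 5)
Your algebraic scheme is the same as the paper's: conjugate $\cG_N$ term by term, use \eqref{eq:Bog-UBHUB} for $\bH_{\rm Bog}$, extract the all-creation/all-annihilation part of the conjugated cubic term as $\cC_N$, and push everything else into $R_2$; the identification of the coefficient and the fact that $\langle 0|R_2|0\rangle=0$ by construction are fine (a small inaccuracy: the conjugated cubic term produces no linear or number-type contractions at all, because $p,\ell,p+\ell\neq 0$ kill the relevant Kronecker deltas). The genuine gap is in the error estimates. The bound you must prove, $\pm R_2\le \frac{C}{\sqrt N}\cN_+^2+\frac{C}{N^{3/2}}(\cN_++1)^3$, carries no constant contribution larger than $N^{-3/2}$: the $N^{-1/2}$ piece is proportional to $\cN_+^2$ and so vanishes on the low-particle sectors. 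Your intermediate bounds $CN^{-1/2}(\cN_++1)^2$ (mixed cubic leftovers) and $CN^{-1}(\cN_++1)^2$ (pairing, number and quartic pieces) do not imply this: tested on the vacuum they give $CN^{-1/2}$ and $CN^{-1}$, both much larger than $CN^{-3/2}$, and your closing remark is backwards, since $N^{-1}(\cN_++1)^2\le N^{-3/2}(\cN_++1)^3$ holds only when $\cN_++1\ge\sqrt N$, i.e.\ precisely not in the regime that matters. This is not cosmetic: in Lemma \ref{lem:cubic-transformation} and Theorem \ref{thm:main-2} one needs $\langle\Phi,R_2\Phi\rangle=O(N^{-3/2})$ on states with $\langle\cN_+\rangle\to 0$; a spurious constant of size $N^{-1/2}$ or $N^{-1}$ in the bound would degrade the energy expansion and the estimate $\langle\Phi,\cN_+\Phi\rangle\le CN^{-3/2}$, which is what ultimately gives \eqref{eq:DM-strong}.

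The missing idea is the weighted Cauchy--Schwarz that the paper sets up in \eqref{eq:Amn}--\eqref{eq:Amn-1} and \eqref{eq:generalN_+bound}. After subtracting vacuum expectations, every leftover monomial $\sum A\, a^*_{m_1}\cdots a^*_{m_s}a_{n_1}\cdots a_{n_t}+{\rm h.c.}$ (with Schur-type coefficient bounds $C/N$, or $C/\sqrt N$ for the cubic leftovers) is estimated by $\eps^{-1}\frac{C}{N}\cN_+ + \eps\frac{C}{N}(\cN_++1)^{s+t-1}$, with the normal-ordered block that annihilates the vacuum always placed on the $\eps^{-1}$ side. For the pieces that do \emph{not} annihilate the vacuum from either side (the $a^*_pa^*_{-p}$ terms with $1/N$ coefficients, and the pure-creation quadratic/quartic leftovers of the conjugated quartic term, which your sketch does not isolate), one takes $\eps=N^{-1/2}$ as in \eqref{eq:DN-last}, so their ``constant-like'' part is pushed down to $N^{-3/2}(\cN_++1)^{s+t-1}$ while the other side stays $\frac{C}{\sqrt N}\cN_+$; the mixed cubic leftovers, having an annihilation operator on each side, give directly a bound proportional to $\cN_+^2$ as in \eqref{eq:generalN_+bound-cubic}. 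With this asymmetric weighting your scheme closes and reproduces the stated bound; with the symmetric estimates you wrote, the asserted inequality for $R_2$ does not follow.
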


\begin{proof} Let us decompose
$$\cG_N- \bH_{\rm Bog}=\widetilde \cD_N + \widetilde \cC_N $$ 
where 
$$\widetilde \cC_N = \frac{1}{\sqrt{N}} \sum_{\ell,p\neq 0, \ell+p\neq 0} \widehat{w}(\ell) a^*_{p+\ell} a^*_{-\ell} a_{p} +\,\, \text{h.c.}
$$
{\bf Non-cubic terms.} Let us prove that $U_B \widetilde \cD_N U_B^*-  \langle0| U_B \widetilde \cD_N U_B^* |0\rangle$ contains only the terms of the form 
\begin{align}
\sum_{m_1,...,m_s,n_1,...,n_t \ne 0 }  A_{m_1,...,m_s, n_1,...,n_t} a_{m_1} ^* ... a_{m_s}^*  a_{n_1}... a_{n_t}   \label{eq:Amn}
\end{align}
with $1\le s+t \le 4$ and the coefficients $A_{m_1,...,m_s, n_1,...,n_t}$ satisfy 
\begin{align} \sup_{m_1,...,m_s\ne 0} \sum_{n_1,...,n_t \ne 0} |A_{m_1,...,m_s, n_1,...,n_t}|  \le \frac{C}{N}, \quad  \sup_{n_1,...,n_t \ne 0} \sum_{m_1,...m_s\ne 0} |A_{m_1,...,m_s, n_1,...,n_t}| \le \frac{C}{N}. \label{eq:Amn-1}
\end{align}

Let us start with the quadratic terms involving $a_p^* a_{-p}^*$. Using \eqref{eq:Bog-trans} and the CCR \eqref{eq:CCR} we have
\begin{align}
 & U_{\rm B} \Big( \frac{1}{4N}\sum_{p\ne 0} \widehat{w}(p) a_p^* a^*_{-p} \Big) U_{\rm B}^* =  \frac{1}{4N} \sum_{p\ne 0}  \widehat{w}(p) (\sigma_p a_p^* + \gamma_p a_{-p}) (\sigma_p a^*_{-p} + \gamma_{p} a_{p})  \nn\\
&= \frac{1}{4N} \sum_{p\ne 0}  \widehat{w}(p)  \Big[ \sigma_p^2 a_p^* a^*_{-p} +  2\sigma_p \gamma_{p} a_p^* a_{p}  + \gamma_p^2 a_{-p} a_{p}+ \sigma_p \gamma_{p} \Big].\label{eq:UBaaUB}
\end{align}
Obviously the constant in \eqref{eq:UBaaUB} satisfies 
$$
\frac{1}{4N} \sum_{p\ne 0}  \widehat{w}(p) \sigma_p \gamma_{p} = \Big\langle  0 \Big| U_{\rm B} \Big( \frac{1}{4N}\sum_{p\ne 0} \widehat{w}(p) a_p^* a^*_{-p} \Big) U_{\rm B}^* \Big| 0 \Big\rangle.
$$
Moreover, the other terms in \eqref{eq:UBaaUB} can be rewritten as
\begin{align}
 \frac{1}{4N} \sum_{p\ne 0}  \widehat{w}(p)  \sigma_p^2 a_p^* a^*_{-p} &= \frac{1}{4N} \sum_{p,q}  \widehat{w}(p)  \sigma_p^2 \delta_{p=-q}a_p^* a^*_{q},  \label{eq:UBaaUB-2a}\\
\frac{1}{2N} \sum_{p\ne 0}  \widehat{w}(p)  \sigma_p \gamma_{p} a_p^* a_{p}  &= \frac{1}{2N}  \sum_{p,q}  \widehat{w}(p)  \sigma_p \gamma_{p} \delta_{q=p} a_p^* a_{q} , \label{eq:UBaaUB-2b} \\
 \frac{1}{4N} \sum_{p\ne 0}  \widehat{w}(p)   \gamma_p^2 a_{-p} a_{p} &=  \frac{1}{4N} \sum_{p,q}  \widehat{w}(p)   \gamma_p^2 \delta_{p=-q} a_{p} a_{q} .\label{eq:UBaaUB-2c}
\end{align}
All of the sums in \eqref{eq:UBaaUB-2a}, \eqref{eq:UBaaUB-2b}, \eqref{eq:UBaaUB-2c} are of the general form \eqref{eq:Amn}-\eqref{eq:Amn-1}, thanks to the uniform bounds \eqref{eq:sigma-gamma-w}. The quadratic terms involving $a_p^* a_{p}$ can be treated similarly.

Next, consider 
\begin{align}
 & U_{\rm B} \Big( \frac{1}{2N}\sum_{p \ne 0 } \widehat{w}(p) a_p^* a^*_{-p} \cN_+ \Big) U_{\rm B}^* = U_{\rm B} \Big( \frac{1}{2N}\sum_{p,q\ne 0} \widehat{w}(p) a_p^* a^*_{-p} a_q^* a_q  \Big) U_{\rm B}^*\nn\\
 &= \frac{1}{2N} \sum_{p,q\ne 0}  \widehat{w}(p) (\sigma_p a_p^* + \gamma_p a_{-p}) (\sigma_p a^*_{-p} + \gamma_{p} a_{p}) (\sigma_q a^*_q + \gamma_q a_{-q}) (\sigma_q a_q + \gamma_q a^*_{-q})  \nn\\
 &= \frac{1}{2N} \sum_{p,q\ne 0}  \widehat{w}(p) \Big[ \sigma_p^2 a_p^* a^*_{-p} +  2\sigma_p \gamma_{p} a_p^* a_{p} + \gamma_p^2 a_{-p} a_{p}+ \sigma_p \gamma_{p} \Big]\times\nn\\
 &\qquad \qquad\qquad \qquad \times \Big[ (\sigma_q^2+\gamma_q^2) a_q^* a_q + \sigma_q \gamma_q (a_q^* a^*_{-q} + a_{-q} a_{q}) + \gamma_q^2 \Big]  \nn\\
 &=\frac{1}{2N} \sum_{p,q\ne 0}  \widehat{w}(p) \Big[ \sigma_p^2 a_p^* a^*_{-p} + \sigma_p \gamma_{p} \Big] \Big[ (\sigma_q^2 +\gamma_q^2) a_q^* a_q + \sigma_q \gamma_q (a_q^* a^*_{-q} + a_{-q} a_{q}) + \gamma_q^2 \Big] \nn \\
  &\quad +  \frac{1}{N} \sum_{p,q}  \widehat{w}(p) \sigma_p \gamma_{p} a_p^* \Big[ (\sigma_q^2 +\gamma_q^2) a_q^* a_q  + \sigma_q \gamma_q (a_q^* a^*_{-q} + a_{-q} a_{q}) +  \gamma_q^2 \Big] a_p \nn\\
   &\quad +  \frac{1}{N} \sum_{p,q\ne 0}  \widehat{w}(p) \sigma_p \gamma_{p}  \Big[ (\sigma_q^2 +\gamma_q^2)    a_p^* a_p \delta_{p,q} +   \sigma_q \gamma_q a_p^* a_{-p}^* (\delta_{p,q} +\delta_{p,-q})  \Big]  \nn\\
 &\quad +\frac{1}{2N} \sum_{p,q}  \widehat{w}(p)  \gamma_{p}^2 \Big[ (\sigma_q^2 +\gamma_q^2) a_q^* a_q  + \sigma_q \gamma_q (a_q^* a^*_{-q} + a_{-q} a_{q}) + \gamma_q^2 \Big]  a_p a_{-p} \nn\\
 &\quad + \frac{1}{2N} \sum_{p,q\ne 0}  \widehat{w}(p)  \gamma_{p}^2 \Big[ (\sigma_q^2 +\gamma_q^2) a_p a_{-p} (\delta_{p,q}+\delta_{p,-q})  \nn\\
 &\qquad \qquad \qquad \qquad + \sigma_q \gamma_q (a_p^* a_p + a_{-p}^* a_{-p} + 1) (\delta_{p,q}+ \delta_{p,-q}) \Big].  
   \label{eq:UBaaNUB}
\end{align}
It is straightforward to see that, except the constant
\begin{align*}
\frac{1}{2N} \sum_{p,q\ne 0}  \widehat{w}(p)  \sigma_p \gamma_{p} \gamma_q^2 + \frac{1}{2N} \sum_{p,q\ne 0}  \widehat{w}(p)  \gamma_{p}^2 \sigma_q \gamma_q  & (\delta_{p,q}+ \delta_{p,-q})  \\
&= \Big\langle 0 \Big| U_{\rm B} \Big( \frac{1}{2N}\sum_{p} \widehat{w}(p) a_p^* a^*_{-p} \cN_+ \Big) U_{\rm B}^* \Big| 0 \Big\rangle,
\end{align*}
all other terms in \eqref{eq:UBaaNUB} can be written as in  \eqref{eq:Amn}, with the corresponding bound \eqref{eq:Amn-1} following from \eqref{eq:sigma-gamma-w}. By the same argument, we can show that the terms involving $a_p^* a_{p}\cN_+$, $a^*_{p+\ell} a^*_{q-\ell} a_p a_q$ and $\cN_+ (\cN_+-1)$ are of the general form \eqref{eq:Amn}-\eqref{eq:Amn-1}.

Next, let us bound the terms of the general form \eqref{eq:Amn}-\eqref{eq:Amn-1}. We consider the case $s\ge t$ (the other case is treated similarly). By the Cauchy-Schwarz inequality 
$$Y^*Z+Z^*Y\le Y^* Y + Z^*Z$$ we have
\begin{align}\label{eq:generalN_+bound}
&\pm \Big( \sum_{m_1,...,m_{s}, n_1,...,n_{t}} A_{m_1,...,m_{s}, n_1,...,n_{t}} a_{m_1}^*\ldots a_{m_{s}}^* a_{n_1}\ldots a_{n_{t}} + h.c.  \Big) \nn \\
&\le \eps^{-1}  \sum_{m_1,...,m_{s}, n_1,...,n_{t}} |A_{m_1,...,m_{s}, n_1,...,n_{t}}| a_{m_1}^*\ldots a_{m_{s}}^* (\cN_+ +5)^{1-s} a_{m_{s}}\ldots a_{m_1} \nn \\
&\quad + \eps \sum_{m_1,...,m_{s}, n_1,...,n_{t}} |A_{m_1,...,m_{s}, n_1,...,n_{t}}| a_{n_{t}}^* \ldots a_{n_1}^* (\cN_+ +5)^{s-1} a_{n_1}\ldots a_{n_{t}}\nn\\
&\le \eps^{-1} \Big( \sup_{m_1,...,m_{s}} \sum_{n_1,...,n_{t}} |A_{m_1,...,m_{s}, n_1,...,n_{t}}| \Big) \sum_{m_1,...,m_{s}}  a_{m_1}^*\ldots a_{m_{s}}^* (\cN_+ +5)^{1-s} a_{m_{s}}\ldots a_{m_1} \nn\\
&\quad + \eps \Big( \sup_{n_1,...,n_{t'}} \sum_{m_1,...,m_{s}} |A_{m_1,...,m_{s}, n_1,...,n_{t}}| \Big) \sum_{n_1,...,n_{t'}}  a_{n_{t}}^* \ldots a_{n_1}^* (\cN_+ +5)^{s-1} a_{n_1}\ldots a_{n_{t}} \nn \\
&\le \eps^{-1}\frac{C}{N} \cN_+  + \eps \frac{C}{N} (\cN_+ + 1) ^{t+s-1}
\end{align}
for all $\eps>0$.  Note that if $\min(t,s) \ge 1$, then on the right side of \eqref{eq:generalN_+bound} we can replace $(\cN_++1)^{t+s-1}$ by $\cN_+^{t+s-1}$. 

In particular, for the non-cubic term $\widetilde \cD_N$, using \eqref{eq:generalN_+bound} with $\eps=N^{-1/2}$ and $t+s\le 4$ we get 
\begin{align}\label{eq:DN-last}
\pm \Big( U_B \widetilde \cD_N U_B^*- \langle0| U_B \widetilde \cD_N U_B^* |0\rangle \Big) \le \frac{C}{\sqrt{N}} \cN_+  + \frac{C(\cN_+ +1)^3}{N^{3/2}}.
\end{align}

\noindent
{\bf Cubic terms.} By using \eqref{eq:Bog-trans}  we have 
\begin{align}
&U_B \left(\frac{1}{\sqrt{N}} \sum_{\ell,p\neq 0, \ell+p\neq 0} \widehat{w}(\ell) a^*_{p+\ell} a^*_{-\ell} a_{p}  \right)  U^*_B \nn\\
&= \frac{1}{\sqrt{N}} \sum_{\ell,p\neq 0, \ell+p\neq 0} \widehat{w}(\ell) (\sigma_{p+\ell}a^*_{p+\ell} + \gamma_{p+\ell} a_{-p-\ell}) (\sigma_{\ell} a^*_{-\ell} + \gamma_{\ell} a_{\ell}) (\sigma_p a_{p} + \gamma_{p} a_{-p}^*)\nn\\
&=  \frac{1}{\sqrt{N}}\sum_{\substack{\ell,p\neq 0\\ \ell+p\neq 0}}\widehat{w}(\ell)  \Big( \sigma_{p+\ell}\sigma_{\ell}\gamma_p a^*_{p+\ell} a^*_{-\ell} a^*_{-p} + \gamma_{p+\ell}\gamma_{\ell}\sigma_p a_{-p-\ell} a_{\ell} a_p\Big) \nn\\
&\quad +  \frac{1}{\sqrt{N}}\sum_{\substack{\ell,p\neq 0\\ \ell+p\neq 0}}  \Big( \hat{w}(\ell)\sigma_{p+\ell}\sigma_{\ell}\sigma_p+\hat{w}(p+\ell)\sigma_{p+\ell}\gamma_{\ell}\gamma_{-p}+\hat{w}(p)\sigma_{p+\ell}\gamma_p \gamma_\ell \Big) a^*_{p+\ell} a^*_{-\ell} a_p \nn\\
&\quad +  \frac{1}{\sqrt{N}}\sum_{\substack{\ell,p\neq 0\\ \ell+p\neq 0}} \Big( \hat{w}(\ell)\sigma_{p+\ell}\gamma_{\ell}\sigma_p+\hat{w}(p+\ell)\sigma_{p+\ell}\sigma_{\ell}\gamma_{p}+\hat{w}(p)\gamma_{-p-\ell} \gamma_p \gamma_\ell  \Big) a_{p+\ell}^* a_p a_\ell. \label{UBCNUB}
\end{align}
By using \eqref{eq:sigma-gamma-w}, we can write the last sum of \eqref{UBCNUB}  as
$$
\sum_{p,q,r} \wt{A}_{p,q,r} a_p^* a_q a_r 
$$
with 
$$ \sup_{p} \sum_{q,r} |\wt{A}_{p,q,r}|  \le \frac{C}{\sqrt{N}}, \quad  \sup_{q,r} \sum_{p} |\wt{A}_{p,q,r}| \le \frac{C}{\sqrt{N}}.$$
Using \eqref{eq:generalN_+bound} with $\eps=1$, $t=1,s=2$, we get 
\begin{align}\label{eq:generalN_+bound-cubic}
&\pm \Big( \sum_{p,q,r} \wt{A}_{p,q,r} a_p^* a_q a_r  + h.c. \Big) \le  \frac{C}{N} (\cN_+ + \cN_+^2)\le   \frac{C \cN_+^2}{N}.  
\end{align}
Here $\cN_+\le \cN_+^2$ since the spectrum of $\cN_+$ is $\{0,1,2,...\}$. The second sum on the right side of \eqref{UBCNUB} can be treated by the same way. Thus from \eqref{UBCNUB} and its adjoint, we have
\begin{align*} 
&\pm \Big( U_B \widetilde \cC_N U_B^* -  \frac{1}{\sqrt{N}}\sum_{\substack{\ell,p\neq 0\\ \ell+p\neq 0}}\widehat{w}(\ell)  \Big[ \Big( \sigma_{p+\ell}\sigma_{\ell}\gamma_p a^*_{p+\ell} a^*_{-\ell} a^*_{-p} + \gamma_{p+\ell}\gamma_{\ell}\sigma_p a_{-p-\ell} a_{\ell} a_p\Big) + h.c.\Big] \Big)  \\
&\le \frac{C\cN_+^2}{N} 
\end{align*}
which is equivalent to
\begin{align} \label{eq:CN-last}
& \pm \Big( U_B \widetilde \cC_N U_B^* -  \cC_N\Big)  \le \frac{C\cN_+^2}{N} . 
\end{align}
In particular, \eqref{eq:CN-last} implies that 
$$\langle 0| U_B \widetilde \cC_N U_B^* |0\rangle =0.$$
Therefore, from \eqref{eq:Bog-UBHUB},  \eqref{eq:DN-last} and \eqref{eq:CN-last} we obtain the desired conclusion of  Lemma \ref{lem:quadratic-transformation}. 
\end{proof}

\section{Cubic transformation} \label{sec:US} 

To factor out the cubic term $\cC_N$ in Lemma \ref{lem:quadratic-transformation}, we will use a cubic renormalization. We will prove 
\begin{lemma}\label{lem:cubic-transformation} Let $\cC_N$ be the cubic term in Lemma \ref{lem:quadratic-transformation} and let $ U_S$ be given in \eqref{eq:US-intro}. Then we have the operator identity on Fock spacee $\cF_+$
$$
U_S U_B \cG_N U_B^* U_S^*= \Big\langle 0 \Big| U_S U_B \cG_N U_B^*  U_S^* \Big| 0 \Big\rangle + \sum_{p\ne 0} e(p) a_p^* a_p +  R_3
$$
where
$$
\pm R_3 \leq C\frac{\cN_+^2}{\sqrt{N}}  +\frac{C(\cN_++1)^4}{N^{3/2}} .  
$$
\end{lemma}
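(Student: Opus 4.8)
The plan is to use the abstract cancellation identity described in the outline: writing $A_0=\sum_{p\neq 0}e(p)a_p^*a_p$ and recalling from Lemma~\ref{lem:quadratic-transformation} that $U_B\cG_N U_B^* = \langle 0|U_B\cG_N U_B^*|0\rangle + A_0 + \cC_N + R_2$, the cubic transformation $U_S=e^S$ is chosen precisely so that $\cC_N + [S,A_0]=0$. I would first verify this algebraic identity: using the CCR one computes $[S,A_0] = -\tfrac{1}{\sqrt N}\sum \eta_{p,q}(e_{p+q}+e_p+e_q)(a^*_{p+q}a^*_{-p}a^*_{-q}\1^{\leq N}+\text{h.c.})$ up to commutator terms coming from the cut-off $\1^{\leq N}$, and the definition \eqref{eq:etadef-intro} of $\eta_{p,q}$ makes the leading piece cancel $\cC_N$ exactly (modulo the discrepancy between $\cC_N$, which has no cut-off, and the cut-off version, which is controlled by $\1^{>N}\leq \cN_+/N$ together with the summability \eqref{eq:US-summability} and the moment bounds).

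Next I would expand the conjugation using the Duhamel-type formulas from the outline. Write $U_S (A_0+\cC_N) U_S^* = A_0 + (\cC_N + [S,A_0]) + \int_0^1\!\!\int_0^t e^{sS}[S,[S,A_0]]e^{-sS}\,ds\,dt + \int_0^1 e^{sS}[S,\cC_N]e^{-sS}\,ds$, so that after the cancellation only the double commutator $[S,[S,A_0]]=-[S,\cC_N]$ and the single commutator $[S,\cC_N]$ survive, and both reduce to controlling $\int_0^1 e^{sS}[S,\cC_N]e^{-sS}\,ds$. I would also need $U_S R_2 U_S^*$, handled by combining the bound $\pm R_2\le C N^{-1/2}\cN_+^2 + CN^{-3/2}(\cN_++1)^3$ from Lemma~\ref{lem:quadratic-transformation} with Lemmas~\ref{lem:moment-US} and \ref{lem:moment-US-2}: conjugation by $U_S$ essentially preserves powers of $\cN_++1$ and the correction terms are lower order in $N$, so $\pm U_S R_2 U_S^* \le C N^{-1/2}\cN_+^2 + C N^{-3/2}(\cN_++1)^4$, which is absorbed into $R_3$. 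The constant and the $A_0$ term reassemble into $\langle 0|U_SU_B\cG_N U_B^* U_S^*|0\rangle + A_0$ after one checks that $\langle 0|(\text{the commutator remainder})|0\rangle$ is itself part of what gets collected into the vacuum expectation; more precisely, one writes $U_SU_B\cG_N U_B^*U_S^* = \langle 0|\cdots|0\rangle + A_0 + R_3$ where $R_3$ is by definition $U_SU_B\cG_N U_B^*U_S^* - \langle 0|\cdots|0\rangle - A_0$, and the content of the lemma is the operator bound on this $R_3$.

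The heart of the estimate is therefore bounding $S$ applied once or twice against $\cC_N$-type and $A_0$-type operators. Since $S$ is cubic with coefficients summable as in \eqref{eq:US-summability} and carries a prefactor $N^{-1/2}$, each commutator $[S,\cdot]$ gains a factor $N^{-1/2}$ while raising the total degree; schematically $[S,\cC_N]$ is a sum of (at most) quartic and lower terms with coefficient size $O(1/N)$, so by the generic Cauchy–Schwarz bound \eqref{eq:generalN_+bound} (degree $\le 6$, i.e.\ $t+s\le 6$, with the extra cut-off factors from $\1^{\leq N}$ bounded via $\1^{\leq N}(\cN_++1)^{1/2}\le \sqrt{N+1}$) one gets $\pm[S,\cC_N]\le C\cN_+^2/N + C(\cN_++1)^4/N^{?}$ type bounds; conjugating this by $e^{sS}$ and using Lemma~\ref{lem:moment-US} (moments of $\cN_+$ stable up to constants, and in the $\cN_+^k$ case up to an $N^{-1}(\cN_++1)^{k+1}$ correction via Lemma~\ref{lem:moment-US-2}) turns the $s$-integral into the stated $\pm R_3 \le C\cN_+^2/\sqrt N + C(\cN_++1)^4/N^{3/2}$. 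I expect the main obstacle to be bookkeeping: correctly enumerating all the terms produced by $[S,\cC_N]$ and $[S,[S,A_0]]$ (there are many, from the three creation operators in $S$ hitting the various operators, plus all the cut-off commutators $[\1^{\leq N},\cdot]$), checking that each is covered by \eqref{eq:generalN_+bound} with the correct degree and $N$-power, and in particular being careful that the cut-off discrepancy between $\cC_N$ (no cut-off) and the $\1^{\leq N}$-truncated object inside $S$ only costs a factor $\cN_+/N$ and not more — this is exactly why the cut-off was inserted in \eqref{eq:US-intro} and why Lemma~\ref{lem:moment-US} with its cut-off-friendly proof is available. A secondary subtlety is that $A_0=\sum e(p)a_p^*a_p$ is not bounded by a power of $\cN_+$ (the dispersion $e(p)$ grows), so in $[S,[S,A_0]]$ one must exploit that the two commutators with $S$ produce the kinetic weight $e_{p+q}+e_p+e_q$ in the numerator, which is exactly cancelled by the denominator in $\eta_{p,q}$, leaving bounded coefficients — so the double commutator must be organized to use $[S,A_0]=-\cC_N$ (bounded coefficients) rather than estimating $[S,A_0]$ and $A_0$ separately.
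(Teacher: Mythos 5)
Your proposal is correct and follows essentially the same route as the paper: the Duhamel expansion of $e^{S}(\sum_{p\neq0}e(p)a_p^*a_p+\cC_N)e^{-S}$, the exact cancellation $\cC_N+[S,\dGamma(\xi)]\approx 0$ built into $\eta_{p,q}$ up to a cut-off remainder supported on $\{\cN_+>N\}$, control of $[S,\cC_N]$ (minus its vacuum expectation) via the generic Cauchy--Schwarz bound with $O(1/N)$-summable coefficients, stability of $\cN_+$-moments under $e^{tS}$ from Lemmas \ref{lem:moment-US}--\ref{lem:moment-US-2} for both this and $U_SR_2U_S^*$, and the final vacuum-expectation bookkeeping that identifies the constant with $\langle 0|U_SU_B\cG_NU_B^*U_S^*|0\rangle$. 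The only cosmetic difference is your organization of the double commutator (the paper keeps $\cC_N+[S,\dGamma(\xi)]+[S,\cC_N]$ under a single conjugation and subtracts a double integral of $[S,\cC_N]$, and splits $S$ into a cut-off-free part plus a piece supported on $\{\cN_+>N-4\}$ to handle the cut-off-induced terms), which does not change the substance.
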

\begin{proof} Recall that from Lemma \ref{lem:quadratic-transformation} we have 
\bq \label{eq:USUB-00}
U_S U_B \cG_N U_B^* U_S^*= \Big\langle 0 \Big|   U_B \cG_N U_B^*   \Big| 0 \Big\rangle + U_S \Big( \dGamma(\xi) +  \cC_N \Big) U_S^* + U_S R_2 U_S^*
\eq
with 
$$
\dGamma(\xi) =\sum_{p\ne 0} e(p) a_p^* a_p,\quad \pm R_2 \le \frac{C}{\sqrt{N}} \cN_+^2 + C\frac{(\cN_+ +1)^3}{N^{3/2}}. 
$$
Thanks to  Lemma \ref{lem:moment-US} and Lemma \ref{lem:moment-US-2}, we find that
$$
\pm U_S R_2 U_S^* \le \frac{C}{\sqrt{N}} \cN_+^2 + C\frac{(\cN_+ +1)^3}{N^{3/2}}. 
$$
Thus this error term is part of $R_3$. 

For the main term, we use $U_S=e^{S}$ and the Duhamel formula 
\bq
e^{X}Y e^{-X}=Y+\int_0^1 e^{tX} [X,Y] e^{-tX}dt \label{eq:Duhamel}
\eq
we can write 
\begin{align}  \label{eq:USUB-01}
&e^{S} \Big( \dGamma(\xi)+ \cC_N\Big)   e^{-S}=\dGamma(\xi) + \cC_N + \int_0^1 e^{tS} \Big( [S,\dGamma(\xi)] + [S,\cC_N] \Big) e^{-tS} dt  \\
&= \dGamma(\xi) +  \int_0^1 e^{tS} \Big( \cC_N + [S,\dGamma(\xi)] + [S,\cC_N] \Big) e^{-tS} dt  - \int_0^1 \int_0^t e^{sS}  [S,\cC_N]  e^{-sS} ds dt \nn .
\end{align}
\noindent
{\bf Controlling $\cC_N + [S,\dGamma(\xi)]$.} Since $\dGamma(\xi)$ commutes with $\cN_+$ and
$$
[a_k^* a_k, a_{p+q}^* a^*_{-p} a^*_{-q}] = (\delta_{k,p+q} + \delta_{k,-p} + \delta_{k,-q}) a_{p+q}^* a^*_{-p} a^*_{-q}
$$
we find that 
\begin{align*} 
[\dGamma(\xi), S] &=    \frac{1}{\sqrt{N}}  \sum_{\substack{p,q\neq 0\\ p+q\neq 0}}  \quad \sum_{k\ne 0}  e(k)  \eta_{p,q}  [a_k^* a_k, a_{p+q}^* a^*_{-p} a^*_{-q}]  \1^{\le N}+ h.c.    \nn\\
& =    \frac{1}{\sqrt{N}} \sum_{\substack{p,q\neq 0\\ p+q\neq 0}}   (e(p+q)+e(p) + e(q) )  \eta_{p,q} a_{p+q}^* a^*_{-p} a^*_{-q} \1^{\le N}+ h.c.  \nn\\
&=  \frac{1}{\sqrt{N}} \sum_{\substack{p,q\neq 0\\ p+q\neq 0}}   \widehat{w}(p)\big(\sigma_{p+q}\sigma_p \gamma_q +\gamma_{p+q}\gamma_p \sigma_q\big) a_{p+q}^* a^*_{-p} a^*_{-q} \1^{\le N} + h.c. 
\end{align*}
which is equivalent to 
$$
\cC_N + [S,\dGamma(\xi)]  =  \frac{1}{\sqrt{N}} \sum_{\substack{p,q\neq 0\\ p+q\neq 0}}   \widehat{w}(p)\big(\sigma_{p+q}\sigma_p \gamma_q +\gamma_{p+q}\gamma_p \sigma_q\big) a_{p+q}^* a^*_{-p} a^*_{-q} \1^{>N} + h.c. 
$$
where $\1^{>N}=\1-\1^{\le N}=\1(\cN_+ >N)$. Thanks to the summability \eqref{eq:sigma-gamma-w}, we can use the Cauchy-Schwarz inequality similarly to  \eqref{eq:generalN_+bound} (with $\eps=1$) to get 
$$
\pm \Big( \cC_N + [S,\dGamma(\xi)] \Big) \le \frac{C}{\sqrt{N}}\cN_+ + \frac{C(\cN_+ +1)^2}{\sqrt{N}} \1^{>N} \le \frac{C}{\sqrt{N}}\cN_+ + \frac{C(\cN_+ +1)^3}{N^{3/2} }
$$
Combining with Lemma \ref{lem:moment-US} and Lemma \ref{lem:moment-US-2} we obtain 
\begin{align} \label{eq:USUB-02}
e^{tS} \Big( \cC_N + [S,\dGamma(\xi)] \Big)   e^{-tS} \le \frac{C}{\sqrt{N}}\cN_+ + \frac{C(\cN_+ +1)^3}{N^{3/2} }, \quad \forall t\in [-1,1]. 
\end{align}

\medskip

\noindent
{\bf Controlling $[S,\cC_N]$.} Let us decompose $S= \widetilde S - S^{>}$ where 
\begin{align*} 
\widetilde S &=\frac{1}{\sqrt{N}}\sum_{\substack{p,q\neq 0\\ p+q\neq 0}} \eta_{p,q}a^*_{p+q}a^*_{-p}a^*_{-q} - h.c.,\nn\\
S^{>}  &= \frac{1}{\sqrt{N}}\sum_{\substack{p,q\neq 0\\ p+q\neq 0}} \eta_{p,q}a^*_{p+q}a^*_{-p}a^*_{-q} \1^{>N}- h.c. 
\end{align*}
The main contribution comes from 
\begin{align*}
[\cC_N,\widetilde{S}]&= \frac{1}{N}\sum_{p,q,r\ne 0}   \sum_{p',q',r'\ne 0}  \delta_{p+q+r=0} \delta_{p'+q'+r'=0} \eta_{p',q'}   \widehat{w}(p) \big(\sigma_{r}\sigma_{p} \gamma_{q} +\gamma_{r}\gamma_{p} \sigma_{q}\big)  \times\\
&\qquad \times   \Big[ a_{r} a_{p} a_{q} , a_{r'}^* a_{p'} ^*a_{q'}^* \Big] + h.c. \\
&=  \frac{1}{N}\sum_{p,q,r\ne 0}   \sum_{p',q',r'\ne 0}  \delta_{p+q+r=0} \delta_{p'+q'+r'=0} \eta_{p',q'}   \widehat{w}(p) \big(\sigma_{r}\sigma_{p} \gamma_{q} +\gamma_{r}\gamma_{p} \sigma_{q}\big)   \times\\
&\qquad \times \Big(   \delta_{r=r'}  a_{p}a_q a_{p'} ^* a_{q'}^*  + \delta_{r=p'}  a_{p}a_q a_{r'} ^* a_{q'}^*  + \delta_{r=q'}  a_{p}a_q a_{r'} ^* a_{p'}^* \\
& \qquad  \qquad + \delta_{p=r'} a_q a_{p'}^* a_{q'}^* a_r  + \delta_{p=p'} a_{q} a_{r'}^* a_{q'}^* a_r   + \delta_{p=q'} a_{q} a_{r'}^* a_{p'}^* a_r \\ 
& \qquad  \qquad + \delta_{q=r'} a_{p'}^* a_{q'}^* a_r a_p   + \delta_{q=p'} a_{r'}^* a_{q'}^* a_r a_p  + \delta_{q=q'} a_{r'}^* a_{p'}^* a_r a_p \Big) + h.c. \\
\end{align*}
By using the CCR \eqref{eq:CCR} as in \eqref{eq:UBaaNUB}, we can write
$$
[\widetilde{S},\cC_N] = \langle 0 | [\widetilde{S},\cC_N] | 0\rangle +  \sum_{p,q\neq 0}A_{pq} a^*_p a_q+  \sum_{p,q,r,k\neq 0}B_{pqrk}a^*_p a^*_q a_r a_k 
$$
where 
\begin{align*}
\sup_{q\ne 0} \sum_{p\ne 0} |A_{pq} | &\le \frac{C}{N}, \quad \sup_{p\ne 0} \sum_{q\ne 0} |A_{pq} | \le \frac{C}{N}, \\
\sup_{p,q \ne 0} \sum_{r,k \ne 0} |B_{pqrk} | &\le \frac{C}{N}, \quad \sup_{r,k \ne 0} \sum_{p,q \ne 0} |B_{pqrk} | \le \frac{C}{N}. 
\end{align*}
By the Cauchy-Schwarz inequality as in \eqref{eq:generalN_+bound}, we get
\bq \label{eq:tilS-C}
\pm \Big( [\widetilde{S},\cC_N] - \langle 0 | [\widetilde{S},\cC_N] |0\rangle\Big) \le C\frac{\cN_+}{\sqrt{N}} + \frac{C(\cN_+ +1)^3}{N^{3/2}}. 
\eq
It remains to bound $[S^>, \cC_N]$. From the explicit form of $S^{>}$ and $\cC_N$, it is straightforward to check that 
$$
\pm [S^>, \cC_N]  = \pm \Big( (S^> ) \cC_N + \cC_N (S^>)^* \Big) \le (S^> )  (S^>)^* + \cC_N^2 \le \frac{C}{N} (\cN_+ +1)^3. 
$$
On the other hand, we observe that 
$$S^> = \1^{>N-4}  S^> \1^{>N-4} $$
and that $\cC_N$ does not change the number of particles more than 3. Therefore,  
\bq \label{eq:S>C}
\pm [S^>, \cC_N] = \pm \1^{> N-7} [S^>, \cC_N] \1^{> N-7} \le  \frac{C}{N} (\cN_+ +1)^3 \1^{> N-7} \le  \frac{C}{N^2} (\cN_+ +1)^4. 
\eq
Moreover, it is obvious that
$$
\langle 0|  [S^>,\cC_N] |0\rangle =0
$$
for $N\ge 10$. Thus from  \eqref{eq:tilS-C} and \eqref{eq:S>C} we obtain 
$$
\pm \Big( [S,\cC_N] - \langle 0 | [S,\cC_N] |0\rangle\Big) \le C\frac{\cN_+}{\sqrt{N}} + \frac{C(\cN_+ +1)^4}{N^{3/2}}.
$$
Combining with Lemma \ref{lem:moment-US} we conclude that
\begin{align}  \label{eq:USUB-03}
\pm e^{tS}\Big( [S,\cC_N] - \langle 0 | [S,\cC_N] |0\rangle\Big) e^{-tS}\le  C\frac{\cN_+}{\sqrt{N}} + \frac{C(\cN_+ +1)^4}{N^{3/2}}, \quad \forall t\in [-1,1].  
\end{align}

\noindent{\bf Conclusion.} Inserting \eqref{eq:USUB-02} and \eqref{eq:USUB-03} in \eqref{eq:USUB-01} we find that
$$
\pm \Big(e^{S} \Big( \dGamma(\xi)+ \cC_N\Big)   e^{-S} - \dGamma(\xi)  - \frac{1}{2} \langle 0 | [S,\cC_N] |0\rangle \Big) \le C\frac{\cN_+}{\sqrt{N}} + \frac{C(\cN_+ +1)^4}{N^{3/2}}.
$$
Combining with  \eqref{eq:USUB-00} we deduce that 
$$
\pm \Big( U_SU_B\cG_N U_B^* U_S^* - \dGamma(\xi) - \Big\langle 0 \Big|   U_B \cG_N U_B^*   \Big| 0 \Big\rangle - \frac{1}{2} \langle 0 | [S,\cC_N] |0\rangle  \Big) \le C\frac{\cN_+^2}{\sqrt{N}} + \frac{C(\cN_+ +1)^4}{N^{3/2}}.
$$
Taking the expectation of the latter bound again the vacuum, we find that
$$
\pm \Big( \langle 0 | U_SU_B\cG_N U_B^* U_S^* |0\rangle - \Big\langle 0 \Big|   U_B \cG_N U_B^*   \Big| 0 \Big\rangle - \frac{1}{2} \langle 0 | [S,\cC_N] |0\rangle  \Big) \le  \frac{C}{N^{3/2}}.
$$
Thus we obtain the desired conclusion
$$
\pm \Big( U_SU_B\cG_N U_B^* U_S^* - \dGamma(\xi) - \langle 0 | U_SU_B\cG_N U_B^* U_S^* |0\rangle \Big) \le C\frac{\cN_+^2}{\sqrt{N}} + \frac{C(\cN_+ +1)^4}{N^{3/2}}. 
$$
This completes the proof of Lemma \ref{lem:cubic-transformation}. 
\end{proof}

\section{Proof of Theorem \ref{thm:main-2}} \label{sec:Thm-2}

\begin{proof} We will prove the ground state energy estimate
$$
E_N=  \frac{N}{2}\widehat w(0)  + \Big\langle 0 \Big| U_S U_B \cG_N U_B^*  U_S^* \Big| 0 \Big\rangle + O(N^{-3/2}).
$$
\noindent
{\bf Upper bound.} We use the following $N$-body trial state
$$\widetilde \Psi_N= \frac{1}{\|U_N^* \1^{\le N} U_{\rm B}^* U_S^*|0\rangle\|} U_N^* \1^{\le N} U_{\rm B}^* U_S^*|0\rangle.$$
Then by the variational principle and the operator inequality on $\cF_+$ in Lemma \ref{lem:Bog-app} we have 
\begin{align*}
E_N &\le \langle \widetilde \Psi_N, H_N \widetilde \Psi_N\rangle = \frac{1}{\|U_N^* \1^{\le N} U_{\rm B}^* U_S^*|0\rangle\|^2}  \Big\langle 0\Big| U_S U_{\rm B} \1^{\le N}  U_N H_N U_N^*  \1^{\le N} U_{\rm B}^*U_S^*\Big|0\Big\rangle \\
&\le  \frac{1}{\|U_N^* \1^{\le N} U_{\rm B}^* U_S^*|0\rangle\|^2}   \Big\langle 0\Big|U_S U_{\rm B}  \Big ( \frac{N \widehat w(0)}{2}  + \cG_N +  \frac{C(\cN_+ +1)^3}{N^{3/2}} \Big) U_{\rm B}^*U_S^*\Big|0\Big\rangle . 
\end{align*}
By Lemma \ref{lem:moment-UB} and  Lemma \ref{lem:moment-US} we know that 
$$
\Big\langle 0\Big|U_S U_{\rm B}   (\cN_+ +1)^3 U_{\rm B}^*U_S^*\Big|0\Big\rangle \le C. 
$$
Consequently, 
\begin{align*} 
\|U_N^* \1^{\le N} U_{\rm B}^* U_S^*|0\rangle\|^2 &=1-\langle 0|U_S U_{\rm B}\1^{> N} U_{\rm B}^* U_S^*|0\rangle \\
&\ge 1 - \langle 0|U_S U_{\rm B} (\cN_+^3/N^3) U_{\rm B}^* U_S^*|0\rangle  \ge 1-CN^{-3}.
\end{align*}
Combining with Lemma \ref{lem:cubic-transformation} we find that
\begin{align*}
E_N &\le \frac{1}{\|U_N^* \1^{\le N} U_{\rm B}^* U_S^*|0\rangle\|^2}  \Big(  \frac{N \widehat w(0)}{2}  + \Big\langle 0\Big|U_S U_{\rm B}    \cG_N U_{\rm B}^*U_S^*\Big|0\Big\rangle + \frac{C}{N^{3/2}} 
\Big) \\
&\le \frac{N \widehat w(0)}{2}  + \Big\langle 0\Big|U_S U_{\rm B}    \cG_N U_{\rm B}^*U_S^*\Big|0\Big\rangle + \frac{C}{N^{3/2}} . 
\end{align*}
In the last estimate, we have also used the simple upper bound 
\bq \label{eq:upper-missing}
\Big\langle 0\Big|U_S U_{\rm B}    \cG_N U_{\rm B}^*U_S^*\Big|0\Big\rangle\le C
\eq 
which will be justified below. 

\bigskip

\noindent {\bf Lower bound.} Let $\Psi_N$ be the ground state of $H_{N}$ and denote $
\Phi := U_S U_B U_N \Psi_N \in \cF_+.$  By Lemmas \ref{lem:cN+},  \ref{eq:moment-UB} and \ref{eq:moment-US}, we have
$$
\langle \Phi, (\cN_+ +1)^4 \rangle \Phi\rangle \le C. 
$$
Then from  the operator identity on $\cF_+^{\le N}$ in Lemma \ref{lem:Bog-app} it follows that   
\begin{align*}
E_N &= \langle \Psi_N, H_N \Psi_N\rangle = \langle U_N \Psi_N, U_N  H_N  U_N^*  U_N\Psi_N\rangle \\
&\ge  \langle U_N \Psi_N, \Big( \frac{N \widehat w(0)}{2}  + \cG_N -  C \frac{(\cN_+ +1)^3 }{N^{3/2}}  \Big)  U_N\Psi_N\rangle \\
&\ge  \frac{N \widehat w(0)}{2}  + \langle \Phi, U_S U_B \cG_N U_B^* U_S^* \Phi\rangle  - CN^{-3/2}. 
\end{align*}
Next, using Lemma \ref{lem:cubic-transformation} together with two simple estimates: 
$$
\sum_{p\ne 0} e(p) a_p^* a_p  \ge \Big( \inf_{q\ne 0} e(q) \Big) \sum_{p\ne 0} a_p^* a_p \ge (2\pi)^2 \cN_+
$$
and 
$$
 \frac{\cN_+^2}{\sqrt{N}} \le \eps \cN_+ + \frac{\cN_+^2}{\sqrt{N}} \1(\cN_+ > \eps \sqrt{N}) \le \eps \cN_+  + C_\eps  \frac{\cN_+^4}{N^{3/2}}
$$
for $\eps>0$ small (but independent  of $N$), we obtain 
$$
U_S U_B \cG_N U_B^* U_S^* \ge  \langle 0| U_S U_B \cG_N U_B^* U_S^* |0\rangle + \cN_+ - C\frac{(\cN_++1)^4}{N^{3/2}}. 
$$
Therefore, 
\begin{align*}
\langle \Phi, U_S U_B \cG_N U_B^* U_S^* \Phi\rangle \ge \langle 0|  U_S U_B \cG_N U_B^* U_S^* |0 \rangle +  \langle \Phi, \cN_+ \Phi \rangle - CN^{-3/2}. 
\end{align*}
Thus 
\bq \label{eq:EN-lower-semi}
E_N\ge  \frac{N \widehat w(0)}{2} +  \langle 0|  U_S U_B \cG_N U_B^* U_S^* |0 \rangle + \Big \langle \Phi, \cN_+  \Phi \Big\rangle - CN^{-3/2}. 
\eq
From \eqref{eq:EN-lower-semi}, since $\Big \langle \Phi, \cN_+  \Phi \Big\rangle \ge 0$ we obtain the desired energy lower bound
$$
E_N \ge  \frac{N \widehat w(0)}{2} +  \langle 0|  U_S U_B \cG_N U_B^* U_S^* |0 \rangle + O(N^{-3/2}).
$$
This and the obvious upper bound $E_N \le  \widehat w(0) (N/2)$ imply the simple estimate \eqref{eq:upper-missing}. Thus the matching energy upper bound is valid, and hence we conclude that 
\bq \label{eq:GSE-thm}
E_N =  \frac{N \widehat w(0)}{2} +  \langle 0|  U_S U_B \cG_N U_B^* U_S^* |0 \rangle + O(N^{-3/2}). 
\eq
{\bf Ground state estimates.} By comparing the ground state energy expansion \eqref{eq:GSE-thm} with the lower bound \eqref{eq:EN-lower-semi} we deduce that 
\bq \label{eq:Phi-N32}
 \langle \Phi, \cN_+  \Phi \rangle \le CN^{-3/2}. 
\eq
Let us write $\Phi=(\Phi_j)_{j=0}^\infty$ with $\Phi_j\in \gH_+^j$. We can choose a phase factor for $\Psi_N$ such that $\Phi_0 \ge 0$. Then 
$$
\| \Phi - |0\rangle \|^2 = |\Phi_0 -1|^2 \le 1- |\Phi_0|^2 = \sum_{j\ge 1} |\Phi_j|^2 \le  \langle \Phi, \cN_+  \Phi \rangle \le CN^{-3/2}. 
$$
Putting back the definition $\Phi= U_S U_B U_N \Psi_N$ we obtain the norm approximation
$$
\| U_N \Psi_N - U_B^* U_S^* |0\rangle \| = \| \Phi - |0\rangle \|^2  \le CN^{-3/2}.
$$
This completes the proof of Theorem \ref{thm:main-2}. 
\end{proof}

\section{Proof of Theorem \ref{thm:main-1}} \label{sec:Thm-1} 

\begin{proof} Let $\Psi_N$ be the ground state for $H_N$. As explained in the introduction, we will decompose
$$
\gamma_{\Psi_N}^{(1)} = P \gamma_{\Psi_N}^{(1)} P + Q \gamma_{\Psi_N}^{(1)} Q + P \gamma_{\Psi_N}^{(1)} Q + Q \gamma_{\Psi_N}^{(1)} P.
$$
{\bf Diagonal terms.} For $Q \gamma_{\Psi_N}^{(1)} Q$, recall from \cite[Theorem 2.2 (iii)]{LewNamSerSol-15}  that
$$
 U_N \Psi_N \to U_B^* |0\rangle 
$$
strongly in the quadratic form of $\bH_{\rm Bog}$ on $\cF_+$. Moreover, it is easy to see that
$$
\bH_{\rm Bog} \ge \frac{1}{2} \sum_{p\ne 0} |p|^2 a_p^* a_p - C \ge \cN_+ - C
$$
(see e.g. \cite[Proof of Theorem 1]{Nam-18}). Therefore, in the limit $N\to \infty$, 
\begin{align} \label{eq:QgQ-proof-a}
\Tr Q \gamma_{\Psi_N}^{(1)} Q &= \langle U_N \Psi_N, \cN_+ U_N \Psi_N\rangle \to  \langle 0| U_B \cN_+ U_B^* |0\rangle \nn \\
&=  \Big\langle 0 \Big| \sum_{p\ne 0} ( \sigma_p  a_p^* + \gamma_p a_{-p}) (\sigma_p a_p + \gamma_p a_{-p}^*)   \Big |0 \Big\rangle =  \sum_{p\ne 0} \gamma_p^2.
\end{align}
Here we have used Bogoliubov's transformation \eqref{eq:Bog-trans}. Similarly, for any $p,q\ne 0$ we have
\begin{align}  \label{eq:QgQ-proof-b}
\langle u_p, Q\gamma_{\Psi_N}^{(1)} Q u_q\rangle &= \langle U_N \Psi_N, a_p^* a_q U_N \Psi_N \rangle \to \langle 0| U_B (a_p^*  a_q) U_B^* |0\rangle\nn \\
&=  \langle  0|  ( \sigma_p  a_p^* + \gamma_p a_{-p}) (\sigma_q a_q + \gamma_p a_{-q}^*) |0 \rangle = \gamma_p^2 \delta_{p,q}.  
\end{align}
From  \eqref{eq:QgQ-proof-a} and \eqref{eq:QgQ-proof-b}, we conclude that 
\bq \label{eq:QgQ-proof}
Q\gamma_{\Psi_N}^{(1)}  Q  \to \sum_{p\ne 0} \gamma_p^2 |u_p\rangle \langle u_p| 
\eq
strongly in trace class.  Consequently,
$$
\Tr (P \gamma_{\Psi_N}^{(1)}  P) = N - \Tr Q\gamma_{\Psi_N}^{(1)}  Q = N - \sum_{p\ne 0} \gamma_p^2
$$
and hence 
$$
\Tr \Big| P \gamma_{\Psi_N}^{(1)}  P + Q \gamma_{\Psi_N}^{(1)} Q -  \Big( N - \sum_{p\ne 0} \gamma_p^2 \Big)  |u_0\rangle\langle u_0|  - \sum_{p\ne 0} \gamma_p^2 |u_p\rangle \langle u_p|  \Big| \to 0. 
$$
{\bf Off-diagonal terms}.  Let us prove that
\bq \label{eq:PQ-off}
\Tr \Big| P  \gamma_{\Psi_N}^{(1)} Q + Q  \gamma_{\Psi_N}^{(1)} P \Big| \le CN^{-1/4}. 
\eq
By using $P=|u_0\rangle \langle u_0|$ and the Cauchy-Schwarz inequality, it suffices to show that
$$
\| Q  \gamma_{\Psi_N}^{(1)}u_0\|^2 \le CN^{-1/2}. 
$$
Since $\{u_p\}_{p\ne 0}$ is an  orthonormal basis for $\gH_+$, we have  
$$
\| Q  \gamma_{\Psi_N}^{(1)} u_0\|^2 =\sum_{p\ne 0} |\langle u_p,  \gamma_{\Psi_N}^{(1)} u_0\rangle|^2 = \sum_{p\ne 0} |\langle \Psi_N, a^*_0 a_p \Psi_N \rangle|^2.
$$
Using the excitation map $U_N$ and the relations \eqref{eq:UN}  we can decompose 
\begin{align*}
  & \langle \Psi_N, a^*_0 a_p \Psi_N \rangle=  \langle U_N \Psi_N,  \sqrt{N-\cN_+} a_p U_N \Psi_N \rangle \\
 &=  \sqrt{N} \langle U_N \Psi_N,   a_p U_N \Psi_N \rangle + \langle U_N \Psi_N,  (\sqrt{N-\cN_+} -\sqrt{N})  a_p U_N \Psi_N \rangle. 
\end{align*}
Therefore,  by the Cauchy-Schwarz inequality 
\begin{align} \label{eq:1-pdm-01}
\| Q  \gamma_{\Psi_N}^{(1)} u_0\|^2 &\le   2 N \sum_{p\ne 0}  |\langle U_N \Psi_N,   a_p U_N \Psi_N \rangle|^2 \nn \\
   &\qquad +2  \sum_{p\ne 0} | \langle U_N \Psi_N,  (\sqrt{N-\cN_+} -\sqrt{N})  a_p  U_N \Psi_N \rangle|^2.
\end{align}

For the second sum in \eqref{eq:1-pdm-01}, using the Cauchy-Schwarz inequality, the simple bound  
$$(\sqrt{N-\cN_+} -\sqrt{N})^2 = \Big( \frac{\cN_+}{\sqrt{N-\cN_+}+\sqrt{N}}\Big)^2  \le  \frac{\cN_+^2}{N}$$
and  Lemma \ref{lem:cN+}, we find that 
\begin{align}\label{eq:1-pdm-01-a}
& \sum_{p\ne 0} | \langle U_N \Psi_N,  (\sqrt{N-\cN_+} -\sqrt{N})  a_p  U_N \Psi_N \rangle|^2 \nn\\
& \le \sum_{p\ne 0} \| (\sqrt{N-\cN_+} -\sqrt{N})  U_N \Psi_N\|^2 \| a_p U_N \Psi_N \|^2\nn \\
&\le N^{-1} \langle U_N \Psi_N,  \cN_+^2 U_N \Psi_N\rangle  \langle U_N \Psi_N, \cN_+ U_N \Psi_N\rangle  \nn\\
&=  N^{-1} \langle  \Psi_N, \cN_+^2   \Psi_N\rangle  \langle  \Psi_N, \cN_+   \Psi_N\rangle \le CN^{-1}. 
\end{align}

To control the first sum in \eqref{eq:1-pdm-01}, we will use the bound  from Theorem \ref{thm:main-2}: 
\bq \label{eq:Phi-key-abc}
\langle \Phi, \cN_+ \Phi\rangle \le CN^{-3/2}, \quad \text{with}\quad \Phi= U_S U_B U_N \Psi_N.
\eq
Also, from Lemma \ref{lem:cN+}, Lemma \ref{lem:moment-UB} and Lemma \ref{lem:moment-US} it follows that
\bq \label{eq:Phi-key-abc-2}
\langle \Phi, (\cN_+ +1)^4 \Phi\rangle \le C.  
\eq
Using the action of Bogoliubov's transformation in \eqref{eq:Bog-trans} and the uniform bounds \eqref{eq:sigma-gamma-w}  we obtain 
\begin{align}\label{eq:1-pdm-01-b1}
& \sum_{p \ne 0}|\langle U_N \Psi_N,   a_p U_N \Psi_N \rangle|^2 =  \sum_{p \ne 0} |\langle  \Phi,   U_S U_B a_p  U_B^* U_S^* \Phi \rangle|^2 \nn\\
&= \sum_{p \ne 0} |\langle   \Phi,  U_S (\sigma_p a_p + \gamma_p a_{-p}^*) U_S^* \Phi \rangle|^2 \le  C  \sum_{p \ne 0} |\langle   \Phi,  U_S a_p U_S^* \Phi \rangle|^2.  
\end{align}
To estimate further the right side of \eqref{eq:1-pdm-01-b1}, we use the Duhamel formula
\begin{align*}
U_S a_p U_S^* = a_p + \int_0^1 e^{tS} [S,a_p] e^{-tS} \d t 
\end{align*}
and the Cauchy-Schwarz inequality to get 
\begin{align}\label{eq:1-pdm-01-b2}
\sum_{p \ne 0} |\langle   \Phi,  U_S a_p U_S^* \Phi \rangle|^2 \le 2 \sum_{p \ne 0} |\langle   \Phi,  a_p  \Phi \rangle|^2 + 2 \sum_{p \ne 0} \int_0^1  |\langle   \Phi,  e^{tS} [S,a_p] e^{-tS}   \Phi \rangle|^2   \d t . 
\end{align}
Thanks to  \eqref{eq:Phi-key-abc} we can bound 
\bq \label{eq:key-ab0}
\sum_{p \ne 0} |\langle   \Phi,  a_p  \Phi \rangle|^2 \le \sum_{p \ne 0} \|a_p  \Phi \|^2 = \langle \Phi, \cN_+ \Phi\rangle \le CN^{-3/2}. 
\eq
It remains to handle the term involving the commutator $[S,a_p]$ in \eqref{eq:1-pdm-01-b2}. Using the CCR \eqref{eq:CCR} and the identity $[a_p, \1^{\le N}] = -\1(\cN_+=N) a_p$ we can decompose
\begin{align*}
[a_p, S] &= \frac{1}{\sqrt{N}} \sum_{\substack{m,n\ne 0\\ m+n \ne 0}} \eta_{m,n} \Big[a_p,a_{m+n}^* a_{-m}^* a_{-n}^* \1^{\le N} - \1^{\le N} a_{m+n}  a_{-m}  a_{-n}  \Big]  \nn\\
&= \frac{1}{\sqrt{N}}\sum_{\substack{m,n\ne 0\\ m+n \ne 0}} \eta_{m,n} (\delta_{p,m+n} a_{-m}^* a_{-n}^* + \delta_{p,-m} a_{m+n}^* a_{-n}^* + \delta_{p,-n} a_{m+n}^* a_n^*) \1^{\le N} \nn\\
& \quad - \frac{1}{\sqrt{N}}\sum_{\substack{m,n\ne 0\\ m+n \ne 0}} \eta_{m,n} a_{m+n}^* a_{-m}^* a_{-n}^* \1(\cN_+=N) a_p\nn\\
&\quad +  \frac{1}{\sqrt{N}}\sum_{\substack{m,n\ne 0\\ m+n \ne 0}} \eta_{m,n} \1(\cN_+=N) a_p a_{m+n}  a_{-m}  a_{-n}  \nn \\
& =: I_1(p) + I_2(p) + I_3(p). 
\end{align*}
Hence, by the Cauchy--Schwarz inequality we have for all $t\in [0,1]$, 
\begin{align} \label{eq:Phi-tS-ab}
\sum_{p\ne 0} |\langle  \Phi, e^{tS} [a_p,S]  e^{-tS} \Phi\rangle|^2 \le 3\sum_{k=1}^3 \sum_{p\ne 0} |\langle  \Phi, e^{tS} I_k(p)  e^{-tS} \Phi\rangle|^2. 
\end{align}
The right side of \eqref{eq:Phi-tS-ab} can be bounded using the Cauchy-Schwarz inequality, the summability \eqref{eq:US-summability}, Lemma \ref{lem:moment-US}, Lemma \ref{lem:moment-US-2}, \eqref{eq:Phi-key-abc} and \eqref{eq:Phi-key-abc-2}. For the terms involving $I_1(p)$, we have  
\begin{align*}
&\sum_{p\ne 0} |\langle  \Phi, e^{tS} I_1(p)  e^{-tS} \Phi\rangle|^2 \\
&\le  \frac{C}{N}   \Big( \sum_{\substack{m,n,p\ne 0\\ m+n \ne 0}} |\eta_{m,n}|^2  \delta_{p,m+n} \Big)   \Big( \sum_{\substack{m,n\ne 0\\ m+n \ne 0}}  \|  (\cN_+ +3)^{-1/2} a_{-m} a_{-n}  e^{-tS} \Phi \|^2  \Big) \| (\cN_+ +3)^{1/2}  \1^{\le N} e^{ -tS} \Phi\|^2\\
& +  \frac{C}{N}   \Big( \sum_{\substack{m,n,p\ne 0\\ m+n \ne 0}} |\eta_{m,n}|^2  \delta_{p,-m} \Big)   \Big( \sum_{\substack{m,n\ne 0\\ m+n \ne 0}}  \|  (\cN_+ +3)^{-1/2} a_{m+n} a_{-n}  e^{-tS} \Phi \|^2  \Big) \| (\cN_+ +3)^{1/2}  \1^{\le N} e^{ -tS} \Phi\|^2 \\
& +  \frac{C}{N}   \Big( \sum_{\substack{m,n,p\ne 0\\ m+n \ne 0}} |\eta_{m,n}|^2  \delta_{p,-n} \Big)   \Big( \sum_{\substack{m,n\ne 0\\ m+n \ne 0}}  \|  (\cN_+ +3)^{-1/2} a_{m+n} a_{-m}  e^{-tS} \Phi \|^2  \Big) \| (\cN_+ +3)^{1/2}  \1^{\le N} e^{ -tS} \Phi\|^2\\
&\le \frac{C}{N} \Big\langle \Phi, e^{tS} \cN_+ e^{-tS} \Phi \Big\rangle \Big\langle \Phi, e^{tS} (\cN_+ +3) e^{ -tS} \Phi \Big\rangle\\
& \le \frac{C}{N} \left( \Big\langle \Phi, \cN_+ \Phi\Big\rangle + \frac{1}{N} \Big\langle \Phi, (\cN_+ +1)^2 \Phi\Big\rangle \right) \Big\langle \Phi, (\cN_+ +3) \Phi \Big\rangle \le \frac{C}{N^2}.
\end{align*}
Similarly,  the terms involving $I_2(p)$ are bounded by 
\begin{align*} 
&\sum_{p\ne 0} |\langle  \Phi, e^{tS} I_2(p)  e^{-tS} \Phi\rangle|^2 \nn \\
&\le  \frac{C}{N}   \Big( \sum_{\substack{m,n\ne 0\\ m+n \ne 0}} |\eta_{m,n}|^2  \Big)   \Big( \sum_{\substack{m,n\ne 0\\ m+n \ne 0}}  \| \1(\cN_+=N) a_{m+n} a_{-m} a_{-n}  e^{-tS} \Phi \|^2  \Big) \Big( \sum_{p\ne 0} \| a_p e^{ -tS} \Phi\|^2 \Big) \nn\\
&\le \frac{C}{N} \Big\langle \Phi, e^{tS} \cN_+^3 e^{-tS} \Phi \Big\rangle \Big\langle \Phi, e^{tS} \cN_+ e^{ -tS} \Phi \Big\rangle\nn\\
&\le  \frac{C}{N} \Big\langle \Phi, (\cN_+ +1)^3 \Phi \Big\rangle \left( \Big\langle \Phi,   \cN_+   \Phi \Big\rangle +\frac{1}{N} \Big\langle \Phi,  (\cN_++1)^2   \Phi \Big\rangle   \right)  \le \frac{C}{N^{2}}.  
\end{align*}
Finally for the terms involving $I_3(p)$, using 
$$
\1(\cN_+ = N) \le (\cN_+/N)^4 
$$
we have
\begin{align*} 
&\sum_{p\ne 0} |\langle  \Phi, e^{tS} I_3(p)  e^{-tS} \Phi\rangle|^2 \nn\\
&\le  \frac{C}{N}   \Big( \sum_{\substack{m,n\ne 0\\ m+n \ne 0}} |\eta_{m,n}|^2  \Big) \|    \1(\cN_+=N) e^{ -tS} \Phi\|^2  \Big( \sum_{\substack{m,n,p\ne 0\\ m+n \ne 0}}  \|  a_{m+n} a_{-m} a_{-n} a_p   e^{-tS} \Phi \|^2  \Big)  \nn\\
&\le \frac{C}{N} \Big\langle \Phi, e^{tS} (\cN_+/N)^4 e^{-tS} \Phi  \Big\rangle \Big\langle \Phi, e^{tS} \cN_+^4  e^{ -tS} \Phi \Big\rangle\nn\\
&\le  \frac{C}{N^5} \Big\langle \Phi, (\cN_+ +1)^4 \Phi \Big\rangle^2 \le \frac{C}{N^{5}}.  
\end{align*}
Thus  we conclude from \eqref{eq:Phi-tS-ab} that
\begin{align}\label{eq:Phi-tS-ab2}
\sum_{p\ne 0}  |\langle  \Phi, e^{tS} [S,a_p]  e^{-tS} \Phi\rangle|^2   \le \frac{C}{N^2}, \quad \forall t\in [0,1]. 
\end{align}
Consequently, 
\begin{align}\label{eq:Phi-tS-ab3}
\sum_{p\ne 0} \int_0^1  |\langle  \Phi, e^{tS} [S,a_p]  e^{-tS} \Phi\rangle|^2 \d t \le \frac{C}{N^2}.
\end{align}
Inserting \eqref{eq:Phi-tS-ab3} and \eqref{eq:key-ab0} in \eqref{eq:1-pdm-01-b2} and \eqref{eq:1-pdm-01-b1} we obtain 
$$
\sum_{p \ne 0}|\langle U_N \Psi_N,   a_p U_N \Psi_N \rangle|^2 \le C \sum_{p \ne 0} |\langle   \Phi,  U_S a_p U_S^* \Phi \rangle|^2 \le CN^{-3/2}. 
$$
Using the latter bound and \eqref{eq:1-pdm-01-a}, we deduce from \eqref{eq:1-pdm-01} that  
$$
\| Q  \gamma_{\Psi_N}^{(1)} u_0\|^2 \le CN^{-1/2}.
$$
This implies \eqref{eq:PQ-off} and completes the proof of Theorem \ref{thm:main-1}. 
\end{proof}


\begin{thebibliography}{11}

\bibitem{ABS-20}  {\sc A. Adhikari, C. Brennecke and B. Schlein}, {\em Bose-Einstein Condensation Beyond the Gross-Pitaevskii Regime}, Annales Henri Poincar\'e, 22 (2021), 1163-1233.




\bibitem{BBCS-19}  
{\sc C. Boccato, C. Brennecke, S. Cenatiempo and B. Schlein}, {\em Bogoliubov theory in the Gross--Pitaevskii limit}, Acta Math. 222 (2019), 219-335.  

  
\bibitem{BBCS-20}  {\sc C. Boccato, C. Brennecke, S. Cenatiempo and B. Schlein}, {\em Optimal rate for Bose-Einstein condensation in the Gross-Pitaevskii regime}, Commun. Math. Phys. 376 (2020), 1311-1395.



\bibitem{Bogoliubov-47}
{\sc N.~N. Bogoliubov}, {\em On the theory of superfluidity}, J. Phys. (USSR),
  11 (1947), p.~23.
  
    \bibitem{BPPS-20} {\sc L. Bo\ss mann, N. Pavlovi\'c, P. Pickl, and A. Soffer}, {\em Higher order corrections to the mean-field description of the dynamics of interacting bosons}, J. Stat. Phys. 178 (2020), 1362--1396. 

  
  \bibitem{BPPS-19} {\sc L. Bo\ss mann, S. Petrat, P. Pickl, and A. Soffer}, {\em Beyond Bogoliubov Dynamics}, Preprint 2019. arXiv:1912.11004.  


\bibitem{BPS-20} {\sc L. Bo\ss mann, S. Petrat, and R. Seiringer}, {\em Asymptotic expansion of the low-energy excitation spectrum for weakly interacting bosons}. Preprint 2020. arXiv:2006.09825

\bibitem{BNNS-19}
{\sc C. Brennecke, P.~T. Nam, M. Napi\'orkowski and B. Schlein}, {\em Fluctuations of N-particle quantum dynamics around the nonlinear Schr\"odinger equation}, Annales de l'Institut Henri Poincar\'e (C) Non-Linear Analysis, 36 (2019), 1201-1235.


\bibitem{DerNap-14}
{\sc J.~{Derezi{\'n}ski} and M.~{Napi{\'o}rkowski}}, {\em {Excitation spectrum
  of interacting bosons in the mean-field infinite-volume limit}}, Ann. Henri Poincar\'e, 15 (2014), pp.~2409--2439.

  


\bibitem{FS19} {\sc S. Fournais and J. P. Solovej}, {\em The energy of dilute Bose gases}. Annals of Math. (to appear).  arXiv:1904.06164


\bibitem{GreSei-13}
{\sc P.~Grech and R.~Seiringer}, {\em The excitation spectrum for weakly
  interacting bosons in a trap}, Comm. Math. Phys., 322 (2013), pp.~559--591.
  
  
  
  
  
  
  

\bibitem{LewNamSerSol-15}
{\sc M.~Lewin, P.~T. Nam, S.~Serfaty, and J.~P. Solovej}, {\em Bogoliubov
  spectrum of interacting {B}ose gases}, Comm. Pure Appl. Math., 68 (2015),
  pp.~413--471.

%
  

  
\bibitem{Nam-18}
{\sc P. T.~Nam}, {\em Binding energy of homogeneous Bose gases}, Lett. Math. Phys., 108 (2018), pp.~141--159.  

\bibitem{NamSei-15}
{\sc P. T.~Nam and R.~Seiringer}, {\em Collective excitations of {B}ose gases in
  the mean-field regime}, Arch. Rational Mech. Anal., 215 (2015), pp.~381--417.
  
\bibitem{Pizzo-15} {\sc A. Pizzo}, {\em Bose particles in a box III. A convergent expansion of the ground state of the
Hamiltonian in the mean field limiting regime}, Preprint 2015. arXiv:1511.07026. 

\bibitem{RadSch-19} {\sc S. Rademacher and B. Schlein}, {\em Central limit theorem for Bose-Einstein condensates}, J. Math. Phys. 60 (2019), 071902. 

\bibitem{RouSpe-18} {\sc N. Rougerie and D. Spehner}, {\em Interacting bosons in a double-well potential: localization regime}, Commun. Math. Phys. 361 (2018), 737--786.
  
\bibitem{Seiringer-11}
{\sc R.~Seiringer}, {\em The excitation spectrum for weakly interacting
  bosons}, Commun. Math. Phys., 306 (2011), pp.~565--578.


\bibitem{Solovej-14}
{\sc J.~P. Solovej}, {\em {Many-body Quantum Physics}}, Lecture notes at Erwin Schr\"odinger Institute, 2014. Online available at \url{http://web.math.ku.dk/~solovej/MANYBODY/}. 

  
  \bibitem{YauYin-09}
{\sc H.-T. Yau and J.~Yin}, {\em The second order upper bound for the ground
  energy of a {B}ose gas}, J. Stat. Phys., 136 (2009), pp.~453--503.

\end{thebibliography}
\end{document}